\title
{Selected results\\
on\\
Lie supergroups and their radial operators}
\author{Alan Huckleberry and Matthias Kalus\footnote
{Supported by SFB/TR 12,
\textit{Symmetry and universality in mesoscopic systems}, of
the Deutsche Forschungsgemeinschaft}}
\date{\today}
\theoremstyle{plain}
\newtheorem{theorem} {Theorem} [section]
\newtheorem{proposition}[theorem]{Proposition}
\theoremstyle{definition}
\begin{document}
\maketitle
\begin {abstract}
\noindent
Foundational material on complex Lie supergroups
and their radial operators is presented.  In particular,
Berezin's recursion formula for describing the radial parts
of fundamental operators in general linear
and ortho-symplectic cases is proved. Local versions of
results which are suitable for applications for computing characters
which are only defined on proper subdomains or covering spaces
thereof are established. 
\end {abstract}
\noindent
Using Howe-duality in a Fock space context one observes that certain integrals
of physical importance can be interpreted as numerical parts of
characters of holomorphic semigroup representations of pieces of Lie 
supergroups which appear as Howe partners of classical symmetry
(\cite{CFZ,HPZ}). The characters are
holomorphic superfunction of a parameter which is varying in a 
covering space of a domain in the base complex reductive 
group of the Lie supergroup at hand.  Such covering spaces, which
arise, e.g., due to the involvement of the metaplectic representation,
can be regarded as the domains of definition of the semigroup 
representations. They contain pieces of maximal tori so that at least in
a local sense one has the appropriate notions of radial functions
and operators. In fact the restrictions $\chi $ of the character
to this torus piece is the function, i.e., the integral, which is of interest.  
The character property implies that the integrals are eigenfunctions 
of the radial parts of Laplace-Casimir operators. In fact, in the cases
considered in \cite{CFZ,HPZ} the eigenvalues are zero! Thus in those
cases every element $D$ of the center universal enveloping algebra 
yields a differential equation which is of the form $\dot D(\chi )=0$
where $\dot D$ is the associated radial part of $D$. 

\bigskip\noindent
The above is conceptually pleasing, but to complete the task of obtaining
an explicit formula for the correlation functions $\chi $ one needs
more concrete information on the radial parts $\dot D$ of the
Laplace-Casimir operators.  Although all of the necessary information
is contained in the fundamental work of F. Berezin, gleaning it  
from (\cite{B}) and adapting it to the local setting indicated above
requires a serious effort. Originally the first author of the present
paper had hoped that it would be possible to provide at least a
roadmap of \cite{B} and include this in \cite{HPZ}. However, this 
proved to be an unsatisfactory solution, in particular due to the
additional length.  Furthermore, we felt that more than a roadmap
is needed.  Thus our work developed into the thesis project
\cite{Ka1}, the second author's paper \cite{Ka2} and our work
here.

\bigskip\noindent
Let us now briefly summarize this paper. The first section is
primarily devoted to an explanation of the method of Grassmann analytic
continuation and its application to defining a Lie supergroup associated
to a given Lie superalgebra $\mathfrak{g}=\mathfrak {g}_0+\mathfrak {g}_1$.  
We do discuss Berezin's definitions of the morphisms of multiplication, inverse
and the unit. However, since we are primarily interested in 
the differential operators, we pay most attention to the representations
of $\mathfrak g$ as left- and right-invariant derivations on the structure
sheaf.

\bigskip\noindent
The remainder of the paper is devoted to considerations of radial
operators, primarily in the cases of $\mathfrak {gl}$ and $\mathfrak {osp}$
which are of interest for the above mentioned Fock space applications.
The first general goal is to describe the radial part of a Laplace-Casimir
operator $D$ by $\dot D=J^{-1}P_DJ$ where $P_D$ is a constant coefficient
polynomial differential operator on the given maximal torus of the
Lie group $G$ associated to $\mathfrak {g}_0$.  A number of assumptions
are needed for this, in particular that the function $J$ should be an 
eigenfunction of the second order Laplace-Casimir operator which 
is defined in the usual way by an invariant nondegenerate supersymmetric 
bilinear form. This function appears as the square-root of 
the superdeterminant of a Jacobian of a coordinate chart which
identifies a neighborhood $\mathcal A$
of a regular point in a maximal torus with a product 
$\mathcal A \times \mathcal B$ where $\mathcal B$ is a subsupermanifold
which is defined as a local orbit of the Lie supergroup by conjugation 
(see the Appendix).

\bigskip\noindent
These assumptions are satisfied in the two cases of interest mentioned 
above. Furthermore, in both cases there is an important infinite series
$\{F_\ell\}$ of elements of the center of the universal enveloping algebra
which defines a series of $\{D_\ell\}$ Laplace-Casimir operators for which
the constant coefficient operators $P_{D_{\ell}}$ can be
described via a certain recursive procedure.  For example, in the
case of $\mathfrak {osp}$, if we use the standard basis for
the standard Cartan algebra, then the polynomials $F_\ell$
are defined by $F_\ell = \sum \varphi _i^{2\ell}+(-1)^\ell\sum \phi_j^{2\ell}$. 
Following Berezin we write $\dot D_\ell=J^{-1}T(F_\ell)J$. The goal is then
to understand the map $F_\ell \mapsto T(F_\ell)$.  Identifying the 
polynomials $F_\ell$ with the constant coefficient operators which they
define, the main result of Berizin is that 
$T(F_\ell)=F_\ell+Q_\ell$ where $Q_\ell$ is a polynomial in
$F_1, \ldots ,F_{\ell-1}$.   The proof of this result is discussed here
in substantial detail in $\S2$.

\bigskip\noindent
The last paragraphs of $\S2$ are devoted to local versions of
the formula mentioned above.  First, we show that the global
results of Berezin apply to give the same results on the
local product neighborhoods $\mathcal A\times \mathcal B$.
These then lift to the covering spaces mentioned at the outset
to give global results there by applying the identity principle.

\bigskip\noindent
Finally, in the appendix we outline the proof of 
$\dot D=J^{-1}P_DJ$ on $\mathcal A\times \mathcal B$ which 
is given in detail in (\cite{Ka1}, Chapter 4) by methods which 
are similar to those used by Helgason in the classical case.
Of course this is only valid under the same conditions as 
Berezin's global result. 
 
\section {Lie supergroup structure}
Here we explain Berezin's construction of an analytic Lie supergroup
associated to a Lie superalgebra $\mathfrak g=\mathfrak g_0+\mathfrak g_1$.
For this the structure sheaf $\mathcal F$ is the sheaf of germs of holomorphic
maps of $G_0$ to the Grassmann algebra $\wedge \mathfrak g_1^*$ which
we write as $\mathcal O_{G_0}\otimes \wedge \mathfrak g_1^*$. 
Since we are primarily interested in invariant differential operators,
we concentrate on construction of the representations of $\mathfrak g$ 
as derivations on $\mathcal F$ which correspond to left and right 
multiplication in the classical Lie group case. 
\subsection {Grassmann envelope}
The first step for the construction is to consider the graded
tensor product $\mathfrak g\otimes \Lambda $ with an arbitrary
(finite-dimensional) Grassmann algebra. We assume throughout
that $\Lambda $ is isomorphic to the Grassmann algebra of 
an $N$-dimensional vector space with $N\ge n:=\mathrm {dim}(\mathfrak g_1)$
and speak of $\Lambda $ as being generated by $N$ (odd) independent 
elements.
$$
(\mathfrak g\otimes \Lambda )_0=
\mathfrak g_0\otimes \Lambda _0\oplus \mathfrak g_1\otimes \Lambda _1
$$
and
$$
(\mathfrak g\otimes \Lambda)_1=
\mathfrak g_0\otimes \Lambda _1\oplus \mathfrak g_1\otimes \Lambda _0\,.
$$
For homogeneous elements $X,Y\in \mathfrak g$ and 
$\alpha ,\beta \in \Lambda $ one defines 
$$
[\alpha X,\beta Y]:=
(-1)^{\vert X\vert \vert \beta \vert}\alpha\beta [X,Y]\,.
$$
Extending by linearity, this defines a Lie superalgebra structure
on the tensor product.  Equipped with this structure, the
\emph{Grassmann envelope}
$\mathfrak g (\Lambda ):=(\mathfrak g\otimes \Lambda )_0$ is
a usual Lie algebra. Note that in addition to being
a Lie algebra, $\mathfrak g(\Lambda)$ is a $\Lambda _0$-module.
For notational convenience we write the multiplication by
elements of $\Lambda $ on the left, i.e., $X\otimes \alpha =:
(-1)^{\vert X\vert \vert \alpha\vert}\alpha X$.
Although we regard $\Lambda $ as a variable, we suppress it notationally
by letting $\tilde {\mathfrak g}:=\mathfrak g(\Lambda )$.
\subsubsection* {Decomposition of $\mathbf{\tilde {\mathfrak g}}$}
The Lie algebra $\mathfrak g_0$ is a subalgebra of 
$\tilde {\mathfrak g}$ which has a complementary ideal $\mathfrak k$
which is generated by the homogeneous elements $\alpha X$ where
$\mathrm {deg}(\alpha )\ge 1$.  We write 
$\tilde {\mathfrak g}=\mathfrak g_0\ltimes \mathfrak k$ as a semidirect
sum. Let $\tilde {\mathfrak g}_0$ denote the Lie subalgebra of 
$\tilde {\mathfrak g}$ which is generated by the homogeneous elements
$\alpha X$ where $\vert \alpha \vert =0$, i.e., where $\alpha $ is
even.  It follows that 
$\tilde {\mathfrak g}_0=\mathfrak g_0\ltimes \mathfrak k_0$, where
$\mathfrak k_0:=\mathfrak {k}\cap (\mathfrak {g}_0\otimes \Lambda _0)$
is the subalgebra of consisting even elements in the nilpotent
Lie algebra $\mathfrak k$.  Observe that the linear subspace
$\mathfrak k_1$ of odd elements of $\mathfrak k$ is stabilized
by $\tilde {\mathfrak g}_0$, i.e., 
$[\tilde {\mathfrak g}_0,\mathfrak k_1]\subset \mathfrak k_1$. 
\subsubsection* {Decomposition at the group level}
Let $K$ be the simply connected Lie group associated to the
Lie algebra $\mathfrak k$.  Recall that 
$\mathrm {exp}:\mathfrak k\to K$ is a diffeomorphism. If we
embed $\mathfrak k$ as a Lie algebra of matrices, $\mathrm{exp}$ is even
polynomially defined with a polynomial inverse.  In particular
it is bianalytic or biholomorphic, depending on the setting at
hand. Let $G_0$ be a (connected) Lie group associated to 
$\mathfrak g_0$ which integrates the representation
of $\mathfrak g_0$ on $\mathfrak g_1$. It follows that
$G_0$ acts on $K$ and $\tilde G:=G_0\ltimes K$ is associated
to $\tilde {\mathfrak g}$. Since $G_0$ acts on the Lie group 
$K_0:=\mathrm {exp}(\mathfrak k_0)$, the semidirect product
$\tilde G_0:=G_0\ltimes K_0$ is a closed Lie subgroup of 
$\tilde G$.  The linear space $\mathfrak k_1$ of odd elements
of $\mathfrak k$ defines a submanifold 
$K_1:=\mathrm {exp}(\mathfrak k_1)$ of $K$ which is stabilized
by $\tilde G_0$-conjugation. We write $\tilde G=\tilde {G}_0K_1$ which
is in a certain sense also a semidirect product.

\subsection {Grassmann canonical coordinates}
Here, starting with canonical coordinates on $G_0$, we construct
Grassmann canonical coordinates on $\tilde G$. 
\subsubsection* {Canonical coordinates on $\mathbf{G_0}$}

Let $U$ be an open
neighborhood of $0\in \mathfrak g_0$ so that 
$\mathrm {exp}:U\to G_0$ is a diffeomorphism onto an open neighborhood
$V$ of $\mathrm {Id}\in G_0$. For $g\in G_0$ we have the open
neighborhood $V.g$ which is identified with the coordinate chart
$U$ via the (bianalytic/biholomorphic) diffeomorphism 
$\mathrm {exp}:U\to V$. We shrink $U$ to $U_1$ so that products
of two elements as well as inverses of elements in $V_1=\mathrm {exp}(U_1)$ 
are still contained in $V$ and can still be identified with elements of $U$.
To simplify notation, the possibly smaller sets $U_1$ and $V_1$
are still denoted by
$U$ and $V$, respectively.  
We cover $G_0$ by neighborhoods of this
form which satisfy the further condition that if $Vg_1$ and $Vg_2$
have nonempty intersection, then $g_1g_2^{-1}\in V$.

\bigskip\noindent
In order to obtain concrete coordinates we let $\{X_1,\ldots X_m\}$
be a basis of $\mathfrak g_0$ and to $X=\sum x_iX_i\in U$ associate
the $m$-tuple $x=(x_1,\ldots ,x_m)$.  If $g\in Vg_1\cap Vg_2$ has
coordinate $x$ with respect to the chart $Vg_1$ and $z$ with
respect to $Vg_2$, and $g_1g_2^{-1}=\mathrm {exp}(X)$, then
change of coordinates is computed by applying the 
Campbell-Baker-Hausdorff formula (CBH) to
$$
\mathrm {exp}(\sum x_iX_i)\mathrm {exp}(X)=\mathrm {exp}(\sum z_iX_i)\,.
$$
\subsubsection* {Grassmann coordinates on $\mathbf{\tilde {G}_0}$}
Recall that $\mathrm {exp}:\mathfrak k_0\to K_0$ is a diffeomorphism.
Thus for $U$ as above we can view $U\times \mathfrak k_0$ as
a coordinate neighborhood of $\mathrm {Id}$: 
Let $x=(x_1,\ldots ,x_m)$ be as above and $h=(h_1,\ldots ,h_m)$
be an $m$-tuple of elements $h_i\in \Lambda $.  Then coordinates
on $V\times K_0$ are given by the diffeomorphism
$U\times \mathfrak k_0\to VK_0\cong V\times K_0$,
$(x,h)\to \mathrm {exp}(\sum x_iX_i)\mathrm {exp}(\sum h_iX_i)$.
Using the following remark, we will express these coordinates
in a more convenient way.
\begin {proposition}
The restriction of the $\tilde G_0$-exponential map to 
$U\times \mathfrak k_0$ is a diffeomorphism onto its image
$VK_0$. 
\end {proposition}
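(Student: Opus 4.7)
The plan is to compare $\mathrm{exp}_{\tilde G_0}$ on $U\times\mathfrak{k}_0$ with the elementary product map $\phi(X,Y):=\mathrm{exp}_{G_0}(X)\mathrm{exp}_{K_0}(Y)$. This $\phi$ is itself a diffeomorphism onto $VK_0$: by hypothesis $\mathrm{exp}_{G_0}\colon U\to V$ is one, $\mathrm{exp}_{K_0}\colon \mathfrak{k}_0\to K_0$ is one because $\mathfrak{k}_0\subset\mathfrak{k}$ is nilpotent, and the semidirect-product structure $\tilde G_0=G_0\ltimes K_0$ provides a diffeomorphism $V\times K_0\to VK_0$.

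Next, naturality of $\mathrm{exp}$ under the quotient $\pi\colon\tilde G_0\to G_0$ with kernel $K_0$ gives $\pi(\mathrm{exp}_{\tilde G_0}(X+Y))=\mathrm{exp}_{G_0}(X)\in V$, hence $\mathrm{exp}_{\tilde G_0}(U\times\mathfrak{k}_0)\subset VK_0$, and there is a unique smooth $f\colon U\times\mathfrak{k}_0\to\mathfrak{k}_0$ with
\[
\mathrm{exp}_{\tilde G_0}(X+Y)=\mathrm{exp}_{G_0}(X)\cdot\mathrm{exp}_{K_0}\bigl(f(X,Y)\bigr).
\]
The proposition thus reduces to showing that the fiberwise map $f_X:=f(X,\cdot)\colon\mathfrak{k}_0\to\mathfrak{k}_0$ is a diffeomorphism for each $X\in U$.

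Differentiating the identity above in $Y$ and using the standard formula for $d\mathrm{exp}$ yields
\[
(df_X)_Y=\Bigl(\tfrac{1-e^{-\mathrm{ad}(f_X(Y))}}{\mathrm{ad}(f_X(Y))}\Bigr)^{-1}\circ\tfrac{1-e^{-\mathrm{ad}(X+Y)}}{\mathrm{ad}(X+Y)}
\]
as endomorphisms of the ideal $\mathfrak{k}_0\subset\tilde{\mathfrak{g}}_0$. The first factor is a polynomial in the nilpotent operator $\mathrm{ad}(f_X(Y))\vert_{\mathfrak{k}_0}$ and is therefore invertible. For the second, I would use the $\Lambda$-degree filtration $\mathfrak{k}_0=\mathfrak{k}_0^{[1]}\supset\mathfrak{k}_0^{[2]}\supset\cdots$ with $\mathfrak{k}_0^{[d]}=\mathfrak{g}_0\otimes\Lambda_0^{\ge 2d}$: here $\mathrm{ad}(Y)$ strictly increases the filtration, whereas $\mathrm{ad}(X)$ preserves it and acts on each graded piece as $\mathrm{ad}(X)\vert_{\mathfrak{g}_0}\otimes\mathrm{id}$. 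Consequently the spectrum of $\mathrm{ad}(X+Y)\vert_{\mathfrak{k}_0}$ coincides with that of $\mathrm{ad}(X)\vert_{\mathfrak{g}_0}$, which for $X\in U$ avoids $2\pi i\mathbb{Z}\setminus\{0\}$ by the very hypothesis that $\mathrm{exp}_{G_0}\vert_U$ is a diffeomorphism. Hence $(df_X)_Y$ is everywhere invertible on $\mathfrak{k}_0$, so $f_X$ is a local diffeomorphism everywhere.

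The main obstacle is upgrading this to a global bijection. I would solve the defining relation $\mathrm{exp}_{\tilde G_0}(X+Y)=\mathrm{exp}_{G_0}(X)\mathrm{exp}_{K_0}(\tilde Y)$ for $Y$ in terms of $\tilde Y$ recursively along the same filtration: on the lowest graded piece $\mathfrak{k}_0/\mathfrak{k}_0^{[2]}$ the $\mathrm{ad}$-action of $\mathfrak{k}_0$ vanishes, and the equation collapses to the linear $\tfrac{1-e^{-\mathrm{ad}(X)}}{\mathrm{ad}(X)}\,Y^{(1)}=\tilde Y^{(1)}$, uniquely solvable by the eigenvalue argument above; at each successive level the correction terms involve only brackets of already-constructed lower-level data, and nilpotence of $\mathfrak{k}_0$ forces the procedure to terminate after finitely many steps. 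Together with everywhere-invertibility of the derivative, this produces a smooth global inverse to $f_X$ and completes the proof.
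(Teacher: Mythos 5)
Your argument is correct, but it reaches the conclusion by a genuinely different route than the paper. Both proofs make the same initial reduction: factor $\mathrm{exp}_{\tilde G_0}(A+B)=\mathrm{exp}(A)\,\mathrm{exp}(\beta_A(B))$ and show that each fiber map $\beta_A\colon\mathfrak{k}_0\to\mathfrak{k}_0$ (your $f_X$) is a diffeomorphism. The paper then simply writes down the inverse: CBH defines both $\beta_A$ and a candidate inverse $\alpha_A$ (via $\mathrm{exp}(A)\mathrm{exp}(C)=\mathrm{exp}(A+\alpha_A(C))$) for small arguments, nilpotence of the Grassmann generators makes both maps polynomial in the $\mathfrak{k}_0$-variable, and the identities $\alpha_A\beta_A=\beta_A\alpha_A=\mathrm{Id}$, valid near $0$, therefore hold everywhere; analyticity of $\alpha_A$ gives smoothness of the inverse. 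You instead compute $(df_X)_Y$ from the $d\,\mathrm{exp}$ formula, prove its invertibility by the spectral/filtration argument, and construct the inverse by Hensel lifting along the $\Lambda$-degree filtration. Your route makes explicit where the hypothesis on $U$ enters (no eigenvalue of $\mathrm{ad}(X)$ in $2\pi i\mathbb{Z}\setminus\{0\}$, equivalently invertibility of $d\,\mathrm{exp}$ on $U$) and yields an effective recursion for the inverse; the paper's route is shorter and needs no differential computation at all. One step you should make explicit: your recursion relies on the degree-$k$ part of $f_X$ in $Y$ taking values in $\mathfrak{k}_0^{[k]}$, which you read off from the CBH series, and CBH is only guaranteed to converge for small $Y$; to know this structure holds on all of $\mathfrak{k}_0$ you need precisely the paper's observation that $f_X$ is polynomial in $Y$ by nilpotence, so that an identity valid near $0$ holds identically. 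With that remark added, your proof is complete.
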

\begin {proof}
For $(A,B)\in U\times \mathfrak k_0$ with $B$ sufficiently small, i.e.,
sufficiently near $0\in \mathfrak {k}_0$,
we apply CBH to obtain a mapping 
$B\to \beta (A,B)$ defined
by $\mathrm {exp}(A+B)=\mathrm {exp}(A)\mathrm {exp}(\beta (A,B))$
Now the definition of $\beta $ involves brackets where 
$B$ appears in a given (higher order) bracket at most $N$ times,
where $\Lambda $ is the Grassmann algebra of a vector space
of dimension $N$.  If we use the coordinate $h$ to describe $B$,
this means that for fixed $A$ the mapping $\beta $ is polynomial
in $h$. So we fix $A$ and regard $\beta $ as a map 
$\beta :\mathfrak k_0\to \mathfrak k_0$.  Analogously we define
$\alpha :\mathfrak k_0\to \mathfrak k_0$ by 
$\mathrm {exp}(A)\mathrm {exp}(C)=\mathrm {exp}(A+\alpha (C))$
in the range where CBH applies and then extend by the polynomial
property.  In the range where CBH applies we have
$\mathrm {exp}(A+B)=\mathrm {exp}(A+\alpha\beta (B))$ and therefore
$\alpha\beta (B)=B$ for $B$ sufficiently small. Thus it follows
from the polynomial property that $\alpha\beta =\mathrm {Id}$.
Arguing in the same way,
$$
\mathrm {exp}(A)\mathrm {exp}(C)=
\mathrm {exp}(A+\alpha (C))=\mathrm {exp}(A)\mathrm {exp}(\beta \alpha(C))
$$
implies that $\beta\alpha (C)=C$ for $C$ sufficiently small and it
follows from the polynomial property that $\beta\alpha =\mathrm {Id}$.
Hence $(A,B)\to \mathrm {exp}(A+B)$ is a bijective map 
$U\times \mathfrak k_0\to VK_0$ and the desired result follows
from its analyticity.  In fact $\beta =\beta (A,B)$ is a polynomial map 
in $B$ with coefficients analytic in $A$.
\end {proof}
Now we cover $G_0$ with neighborhoods $Vg$ as before and
as a result have coordinates $(x,h)$ on $VK_0g=VgK_0$ given by
$(x,h)\to \mathrm {exp}(\sum (x_i+h_i)X_i)g$. Note that in all considerations
the Grassmann variables $z_i=x_i+h_i$ behave as scalars.
\subsubsection* {Canonical coordinates on $\mathbf{\tilde G}$}
Now let $\{\Xi_1,\ldots ,\Xi_n\}$ be a basis 
of $\mathfrak g_1$. An $n$-tuple $\xi =(\xi_1,\ldots ,\xi_n)$
with $\xi_j\in \Lambda _1$ defines $\sum \xi_j\Xi_j\in \mathfrak k_1$ and,
allowing $\xi $ to be arbitrary, we view it as a coordinate on 
$\mathfrak k_1$.  The exact same argument as that above
shows that $(x,h,\xi)\to \mathrm {exp}(\sum (x_i+h_i)X_i+\sum \xi_j\Xi_j)$
defines a diffeomorphism (analytic/biholomorphic) 
$U\times \mathfrak k\to VK$.  Again covering $G_0$ as above, we
cover $\tilde G$ with neighborhoods $VKg$ on which we have
the coordinates $(x,h,\xi)\to 
\mathrm {exp}(\sum (x_i+h_i)X_i+\sum \xi_j\Xi_j)g$.  These are the
promised \emph{Grassmann canonical coordinates}. 
\subsection {The $\mathbf{\mathfrak g}$-representation on
$\mathbf{\mathcal O_{\tilde G}\otimes \Lambda}$.}
For $X$ a homogeneous element of $\mathfrak g$, let $\alpha $
be a homogeneous element of $\Lambda $ with $\alpha=1$ if 
$\vert X\vert =0$ and $\vert \alpha \vert=1$ if $\vert X\vert =1$
so that in particular $\alpha X\vert=0$,  and
consider the 1-parameter subgroup $t\mapsto \mathrm {exp}(t\alpha X)$
in the Lie group $\tilde G$.  For $t$ sufficiently small we use
CBH to express its action by left multiplication in Grassmann 
canonical coordinates:
\small{
$$
\mathrm {exp}(t\alpha X)\mathrm {exp}(\sum (x_i+h_i)X_i+\sum \xi_j\Xi_j)g
=\mathrm {exp}(\sum (x_i+h_i)X_i+\sum \xi_j\Xi_j + t\alpha \eta)g\,.
$$} 
\noindent
For this recall that the CBH computation that produces the righthand
side of this equation involves higher order brackets where in theory
$\alpha X$ can appear any number of times.  However, using the
definition of the Lie bracket, in particular that the Grassmann
variables are behaving as scalars, along with the fact that $\alpha ^2=0$,
it follows that the only nonzero terms are those where $\alpha X$
appears exactly once.  

\bigskip\noindent
Now, for homogeneous
elements $Y,Z\in \mathfrak g_0$ and $\gamma ,\delta \in \Lambda $,
the bracket $[\gamma Y,\delta Z]$ is defined as 
$(-1)^{\vert Y\vert \vert \delta \vert}\gamma \delta [Y,Z]$. But
since $\tilde {\mathfrak g}$ is the even part of 
$\mathfrak g\otimes \Lambda $, it follows that 
$\vert \delta \vert =\vert Z\vert $. Thus, when transporting
$t\alpha $ to the front of a bracket of homogeneous elements
which occurs above, one only pays the price of a sign which does
not depend on $\alpha $. Therefore, for a fixed coordinate chart
$VK.g$, the element $\eta \in {\mathfrak g}\otimes \Lambda $ is well-defined
independent of $\alpha $. 

\bigskip\noindent
Now let $\mathcal L_{\alpha X}$ be the Lie derivative defined by left
multiplication by the 1-parameter group $\mathrm {exp}(t\alpha X)$.
\begin {proposition}
There is an operator $M_X$ on $\mathcal O_{G_0}\otimes \Lambda $ so
that $\mathcal L_{\alpha X}=\alpha M_X$.
\end {proposition}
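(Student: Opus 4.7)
The plan is to exploit the explicit coordinate form of left-translation by $\exp(t\alpha X)$ derived via CBH immediately before the proposition, and then to factor $\alpha$ out of the resulting differential operator. In a Grassmann canonical chart $VKg$ we have the identity
$$
\exp(t\alpha X)\exp\Big(\sum(x_i+h_i)X_i+\sum\xi_j\Xi_j\Big)g=\exp\Big(\sum(x_i+h_i)X_i+\sum\xi_j\Xi_j+t\alpha\eta\Big)g,
$$
where by the discussion preceding the proposition the element $\eta\in\mathfrak g\otimes\Lambda$ depends on the base point and on $X$ but is independent of $\alpha$. Expanding $\eta=\sum_i\eta_iX_i+\sum_j\eta_j'\Xi_j$ with coefficients $\eta_i,\eta_j'\in\Lambda$, writing $z_i=x_i+h_i$, and differentiating the coordinate shift at $t=0$ yields
$$
\mathcal L_{\alpha X}f=\sum_i \alpha\eta_i\frac{\partial f}{\partial z_i}+\sum_j \alpha\eta_j'\frac{\partial f}{\partial\xi_j}.
$$

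Next I would move $\alpha$ to the left of each coefficient using the graded-commutation rules in $\Lambda$. The signs which appear depend only on $|X|$ and on the parity of the coordinate indices, and in particular not on the specific choice of $\alpha$ itself; this is the very same sign cancellation already used in the paragraph above to conclude that $\eta$ is $\alpha$-independent. Collecting the signed terms into a single operator $M_X$ on $\mathcal O_{G_0}\otimes\Lambda$ gives the identity $\mathcal L_{\alpha X}=\alpha M_X$ on the chart, and the $\alpha$-independence of $\eta$ directly yields the $\alpha$-independence of $M_X$.

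To finish I would establish that $M_X$ is globally well-defined. Uniqueness is the cleanest route: by enlarging $\Lambda$ if necessary and choosing $\alpha$ to be a generator of $\Lambda$ not occurring among the Grassmann coordinates $(h,\xi)$, left multiplication by $\alpha$ is injective on the relevant subalgebra, so the equation $\mathcal L_{\alpha X}=\alpha M_X$ pins $M_X$ down intrinsically. Compatibility on overlaps $Vg_1\cap Vg_2$ then follows from that of $\mathcal L_{\alpha X}$, the common factor $\alpha$ cancelling on both sides of the coordinate-transformation identity provided by CBH. The principal obstacle is the sign bookkeeping when transporting $\alpha$ past the $\eta_i,\eta_j'$ and the coordinate derivations, but this is routine once the sign conventions of the Grassmann envelope are in force and the essential analytic input, namely the $\alpha$-independence of $\eta$, is already in hand.
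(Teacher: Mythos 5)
Your proposal is correct and follows essentially the same route as the paper: extract the $\alpha$-independent element $\eta$ from the CBH computation to define $M_X$ locally, then glue across charts by noting that $\alpha(M_X^k - J_{k\ell}M_X^\ell)=0$ and cancelling $\alpha$ after enlarging the Grassmann algebra so that multiplication by $\alpha$ is injective on $\Lambda$-valued quantities. The paper phrases the cancellation via a universally quantified $\alpha$ in a larger $\Lambda'$ rather than a single fresh generator, but the two devices are interchangeable here.
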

\begin {proof}
Let the covering be denoted by $\{VKg_\ell\}$. As we just observed
there exist operators $M_X^\ell$ on functions
on $VKg_\ell$ which are defined independent of $\alpha $
so that $\mathcal L_{\alpha X}^\ell =\alpha M^\ell_X$. Here
$\mathcal L_{\alpha X}^\ell $ denotes the expression of the
globally defined Lie derivative in the local coordinates at
hand. Regarding this as a vector field it obeys the change
of coordinates rule $\mathcal L_{\alpha X}^k=J_{k\ell}\mathcal L_{\alpha X}^\ell$,
where $J_{k\ell}$ is the Jacobian of the change of coordinates.
Therefore $\alpha (M_X^k-J_{k\ell}M_X^\ell)=0$ for every $\alpha \in \Lambda $.
Now we embed $\Lambda $ in a larger Grassmann algebra $\Lambda '$ so
that if $\beta \in \Lambda $ is such that $\alpha \beta =0$ for all
$\alpha $ in $\Lambda '$, then it follows that $\beta=0$. We carry
out the above construction for $\Lambda '$ but, with the exception
of $\alpha $, all computations are made with elements of
$\tilde {\mathfrak g}=\mathfrak g(\Lambda)$.  It follows that
applying the difference  $M_X^k-J_{k\ell}M_X^\ell$ to an element
of $\mathcal O_{\tilde G}\otimes \Lambda $ results in a function
with values $\beta \in \Lambda $ with $\alpha\beta =0$ for
all $\alpha \in \Lambda '$.  Consequently, as desired
$M_X^k=J_{k\ell}M_X^\ell$ as operators on 
$\mathcal O_{\tilde G}\otimes \Lambda $.    
\end {proof}
\subsubsection* {Representation theoretical properties of $\mathbf{M}$}
Define $M:\mathfrak g\to \mathrm {End}(\mathcal O_{G_0}\otimes \Lambda)$
by $X\mapsto M_X$.  Although the notation may not indicate it,
$M_X$ is an operator at the sheaf level.  Observe that if $X$ is
homogeneous, then so is $M_X$ and $\vert M_X\vert =\vert X\vert $.  
The main properties of $M$ can be summarized as follows.
\begin {proposition}
The mapping $M$ is a representation of the Lie superalgebra 
$\mathfrak g$ in the space 
$\mathrm {Der}(\mathcal O_{\tilde G}\otimes \Lambda)$ of derivations of 
$\mathcal O_{\tilde G}\otimes \Lambda $.
\end {proposition}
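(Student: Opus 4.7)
The plan is to extract the super-algebra properties of $M$ from the ordinary Lie algebra properties of the classical action $\mathcal L_\bullet$ of $\tilde{\mathfrak g}$ on $\mathcal O_{\tilde G}\otimes \Lambda$, using the identification $\mathcal L_{\alpha X}=\alpha M_X$ together with the Grassmann cancellation trick already used in the previous proposition. Three things need to be verified: that each $M_X$ is a super-derivation of parity $\vert X\vert$, that $X\mapsto M_X$ is linear, and that it intertwines the super-bracket with the super-commutator of operators.

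For the super-Leibniz rule, I would start from the ordinary Leibniz rule
$$\mathcal L_{\alpha X}(fg)=\mathcal L_{\alpha X}(f)\,g+f\,\mathcal L_{\alpha X}(g),$$
substitute $\mathcal L_{\alpha X}=\alpha M_X$, and commute $\alpha$ past $f$, which costs a sign $(-1)^{\vert X\vert \vert f\vert}$ since $\vert \alpha\vert=\vert X\vert$. After collecting terms one finds
$$\alpha\bigl(M_X(fg)-M_X(f)\,g-(-1)^{\vert X\vert \vert f\vert}f\,M_X(g)\bigr)=0.$$
Embedding $\Lambda$ in a sufficiently large Grassmann algebra $\Lambda'$ and letting $\alpha$ range over $\Lambda'$ (exactly as in the proof of the previous proposition), this forces the parenthesized expression to vanish, giving the super-Leibniz rule. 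Linearity of $M$ in $X$ follows from $\mathcal L_{\alpha(X+X')}=\mathcal L_{\alpha X}+\mathcal L_{\alpha X'}$ by the same cancellation argument.

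For the bracket relation, note that $\alpha X$ and $\beta Y$ are even elements of $\tilde{\mathfrak g}$, so the classical identity $[\mathcal L_{\alpha X},\mathcal L_{\beta Y}]=\mathcal L_{[\alpha X,\beta Y]}$ holds as an ordinary commutator of vector fields on $\tilde G$. The definition of the bracket on $\mathfrak g\otimes\Lambda$ yields $[\alpha X,\beta Y]=(-1)^{\vert X\vert \vert Y\vert}\alpha\beta[X,Y]$, so the right-hand side equals $(-1)^{\vert X\vert \vert Y\vert}\alpha\beta\,M_{[X,Y]}$. On the left-hand side, moving $\beta$ past $M_X$ contributes a sign $(-1)^{\vert X\vert \vert Y\vert}$ (since $M_X$ kills the constant $\beta$ and carries parity $\vert X\vert$), moving $\alpha$ past $M_Y$ contributes the same sign, and swapping $\beta\alpha$ to $\alpha\beta$ contributes yet another. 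Tallying these signs produces
$$\alpha\beta\bigl(M_XM_Y-(-1)^{\vert X\vert \vert Y\vert}M_YM_X-M_{[X,Y]}\bigr)=0,$$
and the Grassmann-embedding trick, applied this time with $\alpha,\beta$ independent odd generators of a sufficiently large $\Lambda'$, cancels $\alpha\beta$ and yields the desired super-bracket identity $[M_X,M_Y]_s=M_{[X,Y]}$.

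The main obstacle is not conceptual but bookkeeping: one must track the parities of $\alpha$, $\beta$ and of the operators $M_X$, $M_Y$ carefully enough to see that all factors of $(-1)^{\vert X\vert \vert Y\vert}$ combine correctly, and one must set up $\Lambda'$ so that $\alpha$ and $\beta$ really are independent odd generators and the product $\alpha\beta$ acts as a non-zero-divisor on the range of the expression it multiplies. Once these two points are handled, all three statements follow from the already-established classical identities for the Lie derivative $\mathcal L_\bullet$ on $\tilde G$.
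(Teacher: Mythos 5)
Your proposal is correct and follows essentially the same route as the paper: both derive the super-bracket identity by equating $\mathcal L_{[\alpha X,\beta Y]}=(-1)^{\vert X\vert\vert Y\vert}\alpha\beta M_{[X,Y]}$ with the ordinary commutator $[\mathcal L_{\alpha X},\mathcal L_{\beta Y}]$ rewritten as $\alpha\beta\bigl((-1)^{\vert X\vert\vert Y\vert}M_XM_Y-M_YM_X\bigr)$, both obtain the super-Leibniz rule from the classical Leibniz rule for $\mathcal L_{\alpha X}$, and both conclude by cancelling $\alpha\beta$ (resp.\ $\alpha$) using independent odd generators of a larger Grassmann algebra $\Lambda'$ with the non-zero-divisor property. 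The sign bookkeeping in your tally matches the paper's displayed identity, so no gap.
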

\begin {proof}
In the discussion 
$\alpha$ and $\beta $ are chosen to be homogeneous elements
in some larger Grassmann algebra $\Lambda '$
with the property that $\alpha\beta\gamma =0$ implies that $\gamma =0$
for every $\gamma \in \Lambda $. For $X,Y$ homogeneous elements
in $\tilde {\mathfrak g}=\mathfrak g(\Lambda)$, we 
carry out the above construction for 
$\alpha X,\beta Y\in \mathfrak g(\Lambda ')$.  Since
$\vert \alpha\vert =\vert X\vert$ and $\vert \beta \vert =\vert Y\vert $,
it follows immediately from the definition of the Lie bracket that
$$
\mathcal L_{[\alpha X,\beta Y]}=
(-1)^{\vert M_X\vert \vert M_Y\vert}\alpha\beta M_{[X,Y]}\,.
$$
On the other hand,
\begin {gather*}
\mathcal L_{[\alpha X,\beta Y]}=
\mathcal L_{\alpha X}\mathcal L_{\beta Y}-
\mathcal L_{\beta X}\mathcal L_{\alpha Y}=\\
\beta \alpha M_XM_Y-\alpha\beta M_YM_X=
\alpha\beta ((-1)^{\vert M_X\vert \vert M_Y\vert}M_XM_Y-M_YM_X)
\end {gather*}
and therefore
$$
\alpha \beta (M_{[X,Y]}-(M_XM_Y-(-1)^{\vert M_X\vert \vert M_Y\vert}M_YM_X)=0\,.
$$
Applying this identity to 
$f\in \mathcal O_{\tilde G}\otimes \Lambda $ and using the cancellation
property of $\alpha\beta $ yields the representation property of
$M$.

\bigskip\noindent
In order to show that $M_X$ is a derivation, we choose $\alpha \in \Lambda '$
with the cancellation property and note that for 
$f,g\in \mathcal O_{\tilde G}\otimes \Lambda $
\begin {gather*}
\alpha M_X(fg)=\mathcal L_{\alpha X}(fg)=\\
\mathcal L_{\alpha X}(f)g+f\mathcal L_{\beta X}g=
\alpha M_X(f)g+f\alpha M_X(g)=\\
\alpha (M_X(f)g+(-1)^{\vert f\vert \vert X\vert}fM_X(g))
\end {gather*}
and the desired derivation property follows by cancellation.
\end {proof}
Before closing this section, we should note that the
same discussion as above with left multiplication 
by the 1-parameter groups $t\mapsto \mathrm {exp}(t\alpha X)$ 
being replaced by right multiplication leads
to analogous representations of ${\mathfrak g }$ in
$\mathrm {Der}(\mathcal O_{\tilde G}\otimes \Lambda )$.  If it
is necessary to differentiate between the representations defined
by left multiplication and those defined by right multiplication,
we denote the former by $X\mapsto L_X$ and the latter by $X\mapsto R_X$.
Of course these representations commute in the graded sense,
i.e.,$L_XR_Y=(-1)^{\vert X\vert \vert Y\vert}R_YL_X$ and the representation
which corresponds to conjugation in the classical Lie group case is given
by $X\to L_X+R_X$.
\subsection {The $\mathbf{\mathfrak g}$-representation on 
$\mathbf{\mathcal O_{G_0}\otimes \wedge {\mathfrak g}_1^*}$}
Above we have constructed $\mathfrak g$-representations on
the function algebra of the Grassman envelope $\tilde G$.  Here
we turn to the main task of this section which is to construct
the representations of $\mathfrak g$ on the structure sheaf 
$\mathcal F=\mathcal O_{G_0}\otimes \Lambda$ which correspond 
to the representations by invariant vector fields of a 
Lie algebra on the structure sheaf of an associated Lie group.
The key idea is to extend (analytic/holomorphic) functions
from $G_0$ to functions on the special Grassmann envelope $\tilde G$
where $\Lambda =\wedge {\mathfrak g}_1^*$ and apply the representations
constructed above.

\subsubsection* {Grassmann analytic continuation}
Let $\Lambda =\wedge {\mathfrak g}_1^*$ and consider a 
$\Lambda $-valued analytic or holomorphic function 
$f$ on an open subset $V$ of $G_0$. We (analytically) continue $f$
to a function $\Psi (f)$ on the open subset $VK$ of the Grassmann
envelope $\tilde G$.  Since the construction does not depend
on the nature of $V$, we suppress it in the discussion and only
discuss functions defined on the full group $G_0$. 

\bigskip\noindent
As a first step we will (analytically) continue $f$ 
to a function on $\tilde G_0$.  
For this we cover $G_0$ as usual by open sets of the form 
$Vg$ where $V$ is an open neighborhood of $\mathrm {Id}$ 
with $\mathrm {exp}:U\to V$ a diffeomorphism. Letting
$\{X_1,\ldots,X_m\}$ be a basis of $\mathfrak g_0$ we have
Grassmann coordinates on $VK_0g$ given by 
$$
(x,h)\mapsto \mathrm {exp}(\sum (x_i+h_i)X_i)
$$   
on $VK_0$ and then composing by multiplication on $VK_0g$. Let
$g_k$ and $g_\ell$ be such that the coordinate neighborhoods
$Vg_k$ and $Vg_\ell$ have nonempty intersection.  Since by assumption 
$f$ is analytic (or holomorphic), in the respective coordinate charts 
it has convergent power series representation $f_k(x)$ and 
$f_\ell(x)$.  Now a given point $v\in Vg_k\cap Vg_\ell$ is
represented as
$$
v=\mathrm{exp}(\sum x_i^kX_i)g_k=\mathrm{exp}(\sum x_i^\ell X_i)g_\ell\,.
$$
with $g_\ell g_k^{-1}=\mathrm {exp}(X)\in V$.  As we underlined above, 
if $(x,h)$ are Grassmann coordinates defined by 
$$
(x,h)\mapsto \exp(\sum (x_i+h_i)X_i)g
$$
where $g=g_k$ or $g=g_\ell$, and the change of variables at
the level of $G_0$, which is computed by CBH, 
is given by $x^k=A_{k\ell}(x^\ell)$, then the change of variables 
for sum $x+h$ is given by the same rule, i.e..
$x^k+h^k=A_{k\ell}(x^\ell+h^\ell)$. 

\bigskip\noindent
Now, since the elements $h$ of the Grassmann algebra are nilpotent, 
the power series $f_k(x+h)$ and $f_\ell(x+h)$ converge 
on $U\times \mathfrak k_0$, and since the transformation rule
for the variable $x$ is the same as that for $x+h$, the resulting
locally defined functions agree on the intersection $VK_0g_k\cap VK_0g_\ell$.
Thus we have defined the basic first step of
\emph{Grassmann analytic continuation}
which is a continuous morphism of sheaves of algebras
$$
GAC: \mathcal O_{G_0}\otimes \Lambda 
\to \mathcal O_{\tilde G_0}\otimes \Lambda\,.
$$
The next step, i.e., the continuation to 
$\mathcal O_{\tilde G}\otimes \Lambda $,
is formal.  For this we observe that $\tilde G$ is 
the product $\tilde G_0\times \mathfrak k_1$.  Using
a basis $\{\Xi_1,\ldots ,\Xi_m\}$ for $\mathfrak g_1$,
an element of $\mathfrak g_1\otimes \Lambda _1$ is written
as $\sum \sigma _j\Xi_j$.  Thus the product structure 
is given by 
$(\tilde g_0,\sigma)\mapsto \tilde g_0\mathrm {exp}(\sum \sigma _j\Xi_j)$.
But the dependence of $\mathrm {exp}(\sum \sigma _j\Xi_j)$ on the $\sigma _j$
is polynomial. Consequently, if $\{\Xi_1^*,\ldots ,\Xi_m^*\}$
is the dual basis of $\mathfrak g_1^*$, then an arbitrary function
$f\in \mathcal O_{\tilde G}\otimes \Lambda $ can be expressed
as 
\begin {equation*}
f=\sum_{\vert I\vert \le m} f_I\Xi^*_I=
\sum_{i_1<\ldots <i_p}f_{i_1,\ldots ,i_p}
\Xi_{i_1}^*\wedge \ldots \wedge \Xi^*_{i_p}\,,
\end {equation*}
where the coefficient functions are arbitrary analytic/holomorphic
functions on $\tilde G_0$.

\bigskip\noindent
Using the above description, the continuation of a function
from $\tilde G_0$ is formal. Recalling that 
$\Lambda :=\wedge {\mathfrak g}_1^*$, a function
$f\in \mathcal O_{\tilde G_0}\otimes \Lambda $ is already of the
form 
$$
f=\sum_{\vert I\vert \le m} f_I\Xi_I^*\,.
$$
Its continuation $\widehat f\in \mathcal O_{\tilde G}\otimes \Lambda $
is simply defined as
$$
\widehat f:=\sum_{\vert I\vert \le m} f_I\Xi^*_I\,,
$$
re-interpreting the elements $\Xi_j\in \mathfrak {g}_1^*$ of the dual
basis as odd coordinate functions on $\tilde G$.
The Grassmann analytic continuation from $\mathcal O_{G_0}\otimes \Lambda $
is then defined by
$$
\Psi :\mathcal O_{G_0}\otimes \Lambda \to 
\mathcal O_{\tilde G}\otimes \Lambda , \ f\mapsto \widehat {GAC(f)}\,.
$$ 
The following summarizes the construction. Recall that for this
$\Lambda :=\wedge {\mathfrak g}_1^*$.
\begin {proposition}
Grassmann analytic continuation $\Psi :\mathcal O_{G_0}\otimes \Lambda \to 
\mathcal O_{\tilde G}\otimes \Lambda$ is an algebra morphism which is
an isomorphism onto its image $\mathcal A$. The algebra $\mathcal A$
is the set of functions $F\in \mathcal O_{\tilde G}\otimes \Lambda $ 
of the
form 
$$
F=\sum_{\vert I\vert \le m}f_I\Xi^*_I
$$
where $f_I=\mathrm {GAC}(h_I)$ for $h_I\in \mathcal O_{G_0}\otimes \Lambda $.
\end {proposition}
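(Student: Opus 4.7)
The plan is to handle the two constituent maps $\mathrm{GAC}$ and $\widehat{\phantom{x}}$ separately, verify each is an algebra morphism, then compose. I will work locally in a coordinate chart $VK g$ since both maps are defined chartwise and the conclusion is a sheaf statement.

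First I would address $\mathrm{GAC}: \mathcal{O}_{G_0}\otimes\Lambda \to \mathcal{O}_{\tilde G_0}\otimes\Lambda$. In canonical coordinates on $Vg$ a section $f$ is given by a convergent power series $f(x)$ (with coefficients in $\Lambda$), and by construction $\mathrm{GAC}(f)(x,h)=f(x+h)$. Since substitution in the formal variable commutes with termwise multiplication, and since multiplication of $\Lambda$-valued power series is associative and $\Lambda$-linear on the coefficient side, $\mathrm{GAC}(fg)=\mathrm{GAC}(f)\,\mathrm{GAC}(g)$ as formal expressions; the nilpotency of the $h_i$ guarantees convergence so this is a genuine identity of analytic/holomorphic sections on $VK_0g$. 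The chart independence needed for a well-defined sheaf morphism is exactly the point already underlined in the text: the CBH change-of-variables rule for $x+h$ coincides with that for $x$, so local definitions glue. Injectivity of $\mathrm{GAC}$ is immediate by setting $h=0$ to recover $f$.

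Next I would analyze the hat extension $\widehat{\phantom{x}}: \mathcal{O}_{\tilde G_0}\otimes\Lambda \to \mathcal{O}_{\tilde G}\otimes\Lambda$. By the preceding discussion a section of the source decomposes uniquely as $\sum_I f_I \Xi^*_I$ with $f_I\in\mathcal{O}_{\tilde G_0}$, and $\widehat{\phantom{x}}$ re-interprets the symbols $\Xi^*_I$ on the right from basis elements of the target $\Lambda=\wedge\mathfrak g_1^*$ to the odd coordinate functions on $\tilde G$ attached to the product decomposition $\tilde G=\tilde G_0\times\mathfrak k_1$. The key algebraic point is that in both readings the $\Xi^*_j$ anticommute and square to zero with identical Koszul sign conventions, so the multiplication tables match and $\widehat{\phantom{x}}$ is an algebra isomorphism onto the subalgebra of $\mathcal{O}_{\tilde G}\otimes\Lambda$ consisting of those sections whose $\Xi^*_I$-coefficients lie in $\mathcal{O}_{\tilde G_0}\subset\mathcal{O}_{\tilde G_0}\otimes\Lambda$. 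Uniqueness of the $\Xi^*_I$-expansion makes $\widehat{\phantom{x}}$ injective.

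Now I would compose: $\Psi=\widehat{\phantom{x}}\circ\mathrm{GAC}$ is an algebra morphism as a composition of such, and is injective because each factor is. The image description is essentially tautological from the definitions. A section $F\in\mathcal{O}_{\tilde G}\otimes\Lambda$ lies in $\Psi(\mathcal{O}_{G_0}\otimes\Lambda)$ precisely if, written in its unique expansion $F=\sum_I f_I\Xi^*_I$ with $f_I\in\mathcal{O}_{\tilde G_0}$, each coefficient $f_I$ arises as $\mathrm{GAC}(h_I)$ for some $h_I\in\mathcal{O}_{G_0}\otimes\Lambda$; this is exactly the stated characterization of $\mathcal A$.

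The only step requiring care is the verification that $\widehat{\phantom{x}}$ genuinely respects multiplication, which hinges on the compatibility of the two interpretations of $\Xi^*_j$ (tensor factor in $\Lambda$ versus odd coordinate function on $\tilde G$). Once one checks that both satisfy the same anticommutation relations and that the $\mathcal{O}_{\tilde G_0}$-coefficients commute past the $\Xi^*_I$ with matching signs in both readings, the remainder of the proof is routine bookkeeping.
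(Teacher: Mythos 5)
Your argument is correct and follows the same route as the paper, which states this proposition explicitly as a summary of the construction just carried out (chartwise substitution $x\mapsto x+h$ into convergent power series, glued via the CBH compatibility of the change of variables, followed by the formal re-interpretation of the $\Xi_j^*$ as odd coordinates). You merely make explicit the routine multiplicativity and injectivity checks that the paper leaves implicit.
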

It should be remarked that a function $F=\sum f_I\Xi_I^*$ is in the image 
$\mathcal A$ if and only if the expansions of the coefficents
$f_I$ in the local coordinates $(x,h)$ are power series in $x+h$.
In other words, from the point of view of the coefficent functions
the Grassmann variables $x+h$ are scalars.

\subsubsection* {The representations of $\mathbf{\mathfrak g}$ on 
the structure sheaf}
Recall that at the beginning we deemed the structure sheaf 
$\mathcal F$ to be the sheaf of germs of holomorphic maps
from $G_0$ to $\wedge {\mathfrak g}_1^*$ which we denote here
by $\Lambda $.  By definition the numerical functions are 
just the analytic/holomorphic
functions $\mathcal O_{G_0}$ with the projection
$\mathrm {num} :\mathcal F\to \mathcal O_{G_0}$ being the obvious one.
Since $\mathcal O_{G_0}$ is canonically embedded in $\mathcal F$
with $\mathrm {ker}(\mathrm {num})$ as complement, in the language
of supergroups the structure is \emph{split}.

\bigskip\noindent
We complete the task of constructing the representations
of $\mathfrak g$ on $\mathcal F$ by noting the following.
\begin {proposition}
The image $\mathcal A$ of Grassmann analytic continuation is
invariant under the representations $L$ and $R$ of $\mathfrak g$
on $\mathcal O_{\tilde G}\otimes \Lambda $.
\end {proposition}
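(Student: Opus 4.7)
The plan is to check invariance locally, in the Grassmann canonical coordinates $(x,h,\xi)$. By the remark following the preceding proposition, an element of $\mathcal O_{\tilde G}\otimes\Lambda$ lies in $\mathcal A$ if and only if, on each chart $VKg$, it is a power series in the combined variable $y:=x+h$ and in $\xi$, with no separate dependence on $x$ and on $h$. Since $M_X$ is characterised by $\mathcal L_{\alpha X}=\alpha M_X$, it suffices to prove, for homogeneous $X\in\mathfrak g$ and a Grassmann parameter $\alpha$ of matching parity, that $\mathcal L_{\alpha X}$ carries any $F\in\mathcal A$ into $\alpha$ times a function whose local expression is again a power series in $(y,\xi)$.

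The essential input is the identity already used in the construction of $M_X$:
\[\mathrm{exp}(t\alpha X)\,\mathrm{exp}\bigl({\textstyle\sum y_iX_i+\sum\xi_j\Xi_j}\bigr)g=\mathrm{exp}\bigl({\textstyle\sum y_iX_i+\sum\xi_j\Xi_j+t\alpha\,\eta}\bigr)g,\]
with $\eta\in\mathfrak g\otimes\Lambda$ independent of $\alpha$. Since $\eta$ is produced via CBH by iterated brackets of $X$ with $\sum y_iX_i+\sum\xi_j\Xi_j$, and since the Grassmann scalars $y_i,\xi_j$ slide outside all brackets, the even and odd components of $\eta$, namely $\eta^{(0)}=\sum\eta_i^{(0)}(y,\xi)X_i$ and $\eta^{(1)}=\sum\eta_j^{(1)}(y,\xi)\Xi_j$, are polynomials in $(y,\xi)$ alone. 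Reading off the new Grassmann canonical coordinates, left multiplication by $\mathrm{exp}(t\alpha X)$ therefore induces the coordinate change $y_i\mapsto y_i+t\alpha\,\eta_i^{(0)}(y,\xi)$ and $\xi_j\mapsto\xi_j+t\alpha\,\eta_j^{(1)}(y,\xi)$, which preserves the property of depending on the coordinates only through $(y,\xi)$.

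For $F\in\mathcal A$ with local expression $F=F(y,\xi)$, pulling back along this action and using $\alpha^2=0$ to truncate the Taylor expansion at first order yields
\[F\bigl(y+t\alpha\eta^{(0)},\,\xi+t\alpha\eta^{(1)}\bigr)=F(y,\xi)+t\alpha\bigl[\eta^{(0)}(y,\xi)\cdot\partial_y F+\eta^{(1)}(y,\xi)\cdot\partial_\xi F\bigr].\]
Differentiating at $t=0$ and stripping off $\alpha$ produces $M_XF=\eta^{(0)}\cdot\partial_y F+\eta^{(1)}\cdot\partial_\xi F$, which is again a power series in $(y,\xi)$; by the local characterisation of $\mathcal A$ it lies in $\mathcal A$. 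The identical computation with right multiplication in place of left (only changing the side on which $X$ is bracketed into $\sum y_iX_i+\sum\xi_j\Xi_j$) handles the representation $R$.

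The main obstacle is the careful bookkeeping needed to confirm that $\eta$ genuinely depends on the local coordinates only through $(y,\xi)$, rather than separately through $x$, $h$ and $\xi$. This is not a conceptual difficulty, but it must be traced through the CBH expansion: the point is that the basis elements $X_i,\Xi_j$ enter the exponent only as terms with scalar Grassmann coefficients $y_i,\xi_j$, and any superbracket extracts those coefficients outside, so no CBH bracket can separate $x$ from $h$ or introduce a functional dependence outside $(y,\xi)$.
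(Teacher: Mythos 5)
Your proof is correct and follows essentially the same route as the paper: both compute $\mathcal L_{\alpha X}$ in Grassmann canonical coordinates via CBH, use $\alpha^2=0$ to reduce to the first-order term, and observe that the resulting coefficients are power series in the combined variable $x+h$ and polynomials in the odd coordinates, which is exactly the local characterisation of $\mathcal A$. The only cosmetic difference is that the paper works with the factored coordinates $\tilde g_0\exp(\sum\sigma_j\Xi_j)$ and explicitly notes that the verification on a general chart $VKg$ differs from the one at the identity only by conjugation by $g$, a point you subsume under "on each chart."
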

\begin {proof}
We must show that if $X$ is a homogeneous element of 
$\mathfrak g$ and 
\begin {equation*}\label{image}
F(\tilde g_0\mathrm {exp}(\sum_j\sigma_j \Xi_j))=\sum f_I\sigma_I
\end {equation*}
is in $\mathcal A$, then $L_X(F)$ and $R_X(F)$ are likewise
in $\mathcal A$. Therefore
we compute $\mathcal L_{\alpha X}(F)$ in Grassmann analytic 
coordinates. For example, in a neighborhood of $\mathrm {Id}$
we must use CBH to convert the product
$$
\mathrm {exp}(t\alpha X)\mathrm {exp}(\sum z_iX_i)
\mathrm {exp}(\sum \sigma _j\Xi_j)
$$
to the form of the Grassmann coordinates. Here $z=x+h$ is
the Grassmann variable. The result
is complicated, but is of the form
$$
\mathrm {exp}(\sum_i (t\alpha \tilde z_i(z,\sigma )+z_i)X_i)
\mathrm {exp}(\sum_j (t\alpha \tilde \sigma _j(z,\sigma)+\sigma _j)\Xi_i)\,.
$$
The key point is that the coefficients 
$\tilde z_i(z,\sigma )=\sum_Jc_i^J(z)\sigma _J$ are
polynomials in $\sigma $ with coefficients which are power series
in $z$.  Now recall that applying $F$ to such an expression
yields
$$
\sum f_I({exp}(\sum_i (t\alpha \tilde z_i(z,\sigma )+z_i)X_i))
((t\alpha \tilde \sigma (z,\sigma)+\sigma )_I\,.  
$$
Using the power series representations of the $f_I$ when
differentiating one obtains polynomials in $\sigma $ with
coeffiecients which are power series in the Grassmann
variables $z$. Of course factoring out $\alpha $ doesn't
change this structure and it follows that the resulting
function is of the desired form, i.e., 
$$
\mathcal L_{\alpha X}(F)=\alpha .\sum \tilde f_I\sum \Xi^*_I
$$
where the coefficients $f_I$ are power series in Grassmann
analytic coordinates.  Note that although we have only verified
this in the coordinates at the identity, the verification
in a general neighborhood $VK_1g$ only differs from that at
the identity by conjugation of $K_1$ by $g\in G_0$. 
\end {proof}
As a consequence we have the main result of this section.
\begin {theorem}
Restricting the representations $L$ and $R$ of $\mathfrak g$
on $\mathcal O_{\tilde G}\otimes \wedge {\mathfrak g}_1^*$ to
the image $\mathcal A$ of Grassmann analytic continuation 
$\Psi :\mathcal F\overset{\cong}{\to} \mathcal A$ defines
the representations of $\mathfrak g$ on the structure
sheaf $\mathcal F$. Restricted to $\mathfrak g_0$
these are given by differentiating elements of $\mathcal F$  
by the natural action of 1-parameter subgroups of $G_0$
on $\mathcal F$.  
\end {theorem}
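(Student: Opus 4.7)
My plan is to combine the preceding three propositions. The representation proposition gives that $L$ and $R$ are Lie superalgebra representations by derivations on $\mathcal{O}_{\tilde G}\otimes \Lambda$; the subsequent proposition gives that $\mathcal{A}$ is invariant under $L$ and $R$; and the proposition before that gives $\Psi:\mathcal{F}\overset{\cong}{\to}\mathcal{A}$ as an algebra isomorphism. Setting $L^{\mathcal F}_X:=\Psi^{-1}\circ L_X\circ \Psi$ and $R^{\mathcal F}_X:=\Psi^{-1}\circ R_X\circ \Psi$, the bracket and derivation properties transfer automatically, since an algebra isomorphism carries derivations to derivations and preserves (super)commutators. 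Thus the first part of the theorem amounts to unwinding these identifications.

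For the second claim, let $X\in \mathfrak{g}_0$. Since $|X|=0$ I may take $\alpha=1$ in the construction of $L_X$, so that $L_X$ is simply the Lie derivative on $\tilde G$ along left translation by the 1-parameter subgroup $t\mapsto \exp(tX)$. The key observation is that for $X\in \mathfrak{g}_0$, left multiplication by $\exp(tX)$ respects the decomposition $\tilde G=\tilde G_0\cdot K_1$: explicitly,
\[
\exp(tX)\cdot \bigl(\tilde g_0\exp(\textstyle\sum_j\sigma_j\Xi_j)\bigr)=\bigl(\exp(tX)\tilde g_0\bigr)\exp(\textstyle\sum_j\sigma_j\Xi_j).
\]
Writing $F=\sum_I f_I\,\Xi^*_I\in \mathcal{A}$, this yields $L_XF=\sum_I (\mathcal{L}^{\tilde G_0}_X f_I)\,\Xi^*_I$, where $\mathcal{L}^{\tilde G_0}_X$ is the classical Lie derivative of the coefficient function on $\tilde G_0$.

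To close the loop I must verify that $\mathrm{GAC}$ intertwines left translation by $\exp(tX)$ on $G_0$ with left translation by $\exp(tX)$ on $\tilde G_0$. In Grassmann coordinates $(x,h)$ on $VK_0g$, left multiplication by $\exp(tX)$ is computed by one and the same CBH formula after substituting either $x$ or $x+h$, since the Grassmann variables $h_i$ behave as scalars in CBH. Hence if $h_I\in \mathcal{O}_{G_0}$ is the numerical function with $f_I=\mathrm{GAC}(h_I)$, then $\mathcal{L}^{\tilde G_0}_X f_I=\mathrm{GAC}(\mathcal{L}^{G_0}_X h_I)$. Transporting back along $\Psi^{-1}$ shows that $L^{\mathcal F}_X F$ is exactly the result of differentiating $F\in \mathcal{F}$ via the natural action of the 1-parameter subgroup $\exp(tX)$ of $G_0$, as desired.

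The main obstacle is the second claim, and within it the bookkeeping required to see that the \emph{same} CBH identity governs left multiplication on $G_0$ and on $\tilde G_0$, so that $\mathrm{GAC}$ is equivariant for even 1-parameter subgroups. The first claim, by contrast, is purely a transport of structure along the isomorphism $\Psi$.
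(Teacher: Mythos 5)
Your proposal is correct and follows essentially the same route as the paper, which presents this theorem as an immediate consequence of the three preceding propositions (the representation/derivation property of $M$, the isomorphism $\Psi:\mathcal F\to\mathcal A$, and the invariance of $\mathcal A$ under $L$ and $R$) without writing out a separate proof. Your additional verification of the $\mathfrak g_0$ claim via the decomposition $\tilde G=\tilde G_0K_1$ and the fact that GAC uses the same CBH change of variables for $x$ and $x+h$ is exactly the mechanism implicit in the paper's construction.
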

\subsection {Comments on Lie supergroups}
A Lie supergroup is more than a structure sheaf $\mathcal F$ 
with representations $L$ and $R$ which extend the natural
representations of $\mathfrak g_0$.  The appropriate additional
structure can be formulated in terms of
a triple $(\mu, \iota ,\varepsilon)$
of maps at the sheaf level which correspond to
multiplication, inverse and the identity. For example,
in the classical case the multiplication map
$\mu :\mathcal O_{G}\to \mathcal O_{G\times G}$ is given
by $\mu(f)(g_1,g_2)=f(g_1g_2)$. Certain compatibility
conditions must be fulfilled (see \cite{B}, p.247). Once
this triple is defined, the representations $L$ and $R$
can be computed.  Using Grassmann analytic continuation
as above, Berezin does indeed define such a triple and then
constructs $L$ and $R$ as above. 

\bigskip\noindent
Berezin's construction of the Lie supergroup triple 
uses a notion of a superfunction which 
is slightly different from that above.  In simple terms
this is the difference between the notion of a variable
and the evaluation of the variable. For example, a complex 
polynomial $P$ in several commuting variables $z_1,\ldots ,z_m$ 
is an element of the ring $\mathbb C[z_1,\ldots, z_n]$.  If $R$ is any
commutative $\mathbb C$-algebra, e.g., of numbers, functions, operators, etc.,
$P$ can be evaluated to define a function 
$P_R:R\times \cdots \times R\to R$.  In the setting at hand
$\xi_1,\ldots, \xi_n$ are anticommuting variables and Berezin defines a
superfunction on $G_0$ to be an element 
$f\in \mathcal O_{G_0}[\xi_1,\ldots ,\xi_m]$.  This is a function 
$f=f(g,\xi)$ in two blocks of variables. The first variables $g$ can
be regarded as the commuting variables defined by the coordinates in
the manifold $G_0$, e.g., in some power series ring. The second variables
are the anticommuting variables which generate a Grassmann algebra
$\Lambda (\xi_1,\ldots ,\xi_m)$.

\bigskip\noindent
Continuing with the simple example, if $a_1,\ldots a_n$ are algebraically
independent elements of a commutative ring $R$, then any algebraic
combination $\sum c_Ia^I$ defines a unique polynomial $P\in R[z_1,\ldots, z_n]$.
In our context where $n:=\mathrm {dim}(\mathfrak g_1)$, one considers
$\mathcal O_{G_0}\otimes \Lambda $ where $\Lambda $ 
is a Grassmann algebra generated by $N$ elements with
$N\ge m$. If $a_1,\ldots, a_m$ are independent elements of $\Lambda $,
then there is a canonical isomorphism between the subalgebra spanned
by these elements and the algebra of superfunctions 
$\mathcal O_{G_0}[\xi_1,\ldots ,\xi_m]$. 

\bigskip\noindent
As stated at the outset we wish to regard the structure sheaf
$\mathcal F$ as the sheaf of germs of holomorphic functions on
$G_0$ with values in $\wedge \mathfrak {g}^*_1$.  This may seem
different from Berezin's sheaf, but if one exchanges variables for
evaluated variables, it is not.  The main point in this regard is
that Berezin's construction uses a basis $\{\Xi_1,\ldots ,\Xi_n\}$
for $\mathfrak {g}_1$. Thus evaluating a Berezin superfuction
$f=f(g,\xi)\in \mathcal O_{G_0}[\xi_1,\ldots ,\xi_n]$ on 
$\Xi^*_1,\ldots ,\Xi^*_n$ defines such a function. Conversely, 
starting with a holomorphic function $f:G_0\to \wedge \mathfrak {g}^*_1$
and the basis $\{\Xi_1,\ldots ,\Xi_n\}$ one obtains a Berizin
superfunction by replacing the $\Xi_j^*$ by the variables $\xi_j$.   

\bigskip\noindent
When defining the Lie supergroup triple $(\mu, \iota ,\varepsilon)$
the variable viewpoint is very useful. For example, the standard
multiplication morphism $\mu :G_0\to G_0\times G_0$ must be
lifted to a map $\varphi :\mathcal F_{G_0}\to \mathcal F_{G_0\to G_0}$
of sheaves.  This map can be defined as follows.  We regard
a function in $\mathcal F_{G_0}$ as a function of one variable
$(g,\xi)$.  Here $g$ is already Grassmann continued to $\tilde G_0$
where $\Lambda $ is generated by $N$ odd elements with $N\ge 2n$.
As usual $\xi $ denotes an $n$-tuple of
independent Grassmann variables.  The functions in
$\mathcal F_{G_0\times G_0}$ are functions of two such variables
$((g_1,\xi_1),(g_2,\xi_2))$. Now we \emph{evaluate} these
variables at points in the group $\tilde G$ by using
$2n$ independent odd elements $a_1,\ldots ,a_n,a_{n+1},\ldots ,a_{2n}$
in $\Lambda$.  The pair of variables defines a pair of group
elements  
$$
g_1\mathrm {exp}(\sum a_j\Xi_j),\
g_2\mathrm {exp}(\sum a_{n+j}\Xi_j)\in \tilde G\,.
$$
Using the decomposition $\tilde G=\tilde G_0K_1$, the product of
these group elements is written as $g\mathrm {exp}(\sum \eta _j\Xi_j)$.
Now we change back to the variable standpoint and regard $g$
as a Grassmann analytically continued variable in $\tilde G_0$
and $\eta =(\eta_1,\ldots ,\eta_n)$ as a variable which (as can be
checked) consists of $n$ independent odd Grassmann variables.  Since
the construction of manipulating the product to the form
$g\exp(\sum _j\eta_j\Xi_j)$ only uses the \emph{variable properties}
of the elments $a_j$, the \emph{product variable} $(g,\eta)$ does
not depend on the choice of the $a_j$. The multiplication morphism
at the sheaf level is then defined in the natural way:
$$
\varphi (f)((g_1,\xi_1),(g_2,\xi_2)):=f(g,\eta)\,.
$$
Lifting the inverse mapping $g\mapsto g^{-1}$ to the sheaf level
is defined analogously. The evaluation at the identity is
given by $\varepsilon (f)(g,\xi)=f(\mathrm{Id},0)$.  Since these
operations are defined by the group structure of $\tilde G$, it
can be directly checked that they have the desired compatibility
properties.  As indicated
at the outset of this paragraph, if we regard $\mathcal F$ as
the sheaf of $\wedge \mathfrak {g}_1^*$-valued holomorphic functions,
then the Lie supergroup structure is defined at that
level by replacing a function $\sum f_I(g)(\Xi^*)^I$ by
a Berezin function $f=f(g,\xi)$ of two variables defined
by $f(g,\xi)=\sum_If_I(g)\xi^I$.  Multiplication is then
defined as above at the level of variables and then one returns
to $\wedge \mathfrak {g}_1^*$-valued holomorphic functions
by evaluating the variables at $\Xi_1^*,\ldots ,\Xi_n^*$.

\bigskip\noindent
The representations $L$ and $R$ of $\mathfrak {g}$ on $\mathcal F$
are in fact by derivations which are invariant under the group
structure.  For example, invariance by right-multiplication means
that 
\begin {equation}\label{invariance}
(M_X\otimes \mathrm {Id})\circ \varphi =\varphi \circ M_X
\end {equation}
for $X\in \mathfrak {g}$.  This is immediate for the derivations
defined by even elements, because the supergroup multiplication
by elements of $\tilde G_0$ is just usual multiplication.  For
odd elements $X\in \mathfrak {g}_1$ one must check that 
the formula analogous to $\ref{invariance}$ with $M_X$ replaced
by the Lie derivative $\mathcal L_{\alpha X}$ in $\tilde G$ 
makes sense and is valid. Then one cancels $\alpha $ in the same way that
was done for the definition of $M_X$.   
     
\bigskip\noindent
There are various other methods of constructing a
Lie supergroup associated to $\mathfrak g$, (see \cite{K} and \cite{Ka2}
for constructions using Lie-Hopf algebras and \cite {Ko} for the
dual construction at the level of the structure sheaf).
But it turns out that, even in the more delicate holomorphic 
setting, these are all equivalent (see \cite {V}).
For further discussion of this matter, e.g., for a detailed comparison of
the various definitions, see \cite {Ka1}. 
\section {Radial operators}
Our concrete goal here is to describe
certain series of radial differential operators in 
sufficiently concrete form for computational
applications in \cite {CFZ} and \cite {HPZ}.  
As in the previous section, the main results were
originally proved by Berezin.
\subsection {Basic definitions}
Here we recall be the basic objects for the study of
radial differential operators on Lie supergroups.
\subsubsection* {Universal enveloping algebra}
The \emph{universal enveloping algebra} of a Lie superalgebra
$\mathfrak g$ is the quotient $U(\mathfrak g)$ of the full tensor algebra 
$T(\mathfrak g)$ by the ideal generated
by $(X\otimes Y-(-1)^{\vert X\vert \vert Y\vert }Y\otimes X)-[X,Y]$
where $X,Y$ are homogeneous elements of $\mathfrak g$ regarded
as elements of $(\mathfrak g)$. If
$X_1\otimes X_2\cdots \otimes X_k$ is a monomial in $T(\mathfrak g)$,
then its image in $U(\mathfrak g)$ is equipped with the sign
$(-1)^{\vert X_1\vert+\ldots +\vert X_k\vert}$. This defines a 
$\mathbb Z_2$-grading on $U(\mathfrak g)$ for which the induced
bracket defines a Lie superalgebra structure.  Here we shall 
primarily be concerned with the center $Z(\mathfrak g)$ of
$U(\mathfrak g)$. This is the subalgebra of $U(\mathfrak g)$ 
consisting of those elements $X$ with $\mathrm {ad}(X)=[X,\cdot ]=0$.
\subsubsection* {Laplace-Casimir operators}
Recall that a representation of a Lie superalgebra $\mathfrak g$
is a superalgebra morphism $\rho : \mathfrak {g}\to \mathrm {End}(V)$
to the superalgebra of linear maps of a graded vector space
$V=V_0\oplus V_1$. Such a representation extends to a representation
$\rho :U(\mathfrak g)\to \mathrm {End}(V)$. The examples of main
importance here are the representations $L$ and $R$ of 
$\mathfrak g$ constructed in the previous section on the 
structure sheaf $\mathcal F$ of superfunctions of the
associated Lie supergroup.  In particular, we consider the
representation $X\mapsto L_X$ and extend it to a representation
of $U(\mathfrak {g})$ by differential operators.  The 
\emph{Laplace-Casimir operators} are those in the image
of the center $Z(\mathfrak g)$.   
\subsection {Radial functions}
A superfunction $f$ on $G_0$ is said to be \emph{radial} if 
it is annihilated by the representation $X\to L_X+R_X$
of $\mathfrak g$.  Let $\mathcal R_{G_0}$ denote the sheaf of radial holomorphic
functions on $G_0$. Observe that if $D$ is a Laplace-Casimir
operator, then $D\vert \mathcal R_{G_0}:\mathcal R_{G_0}\to \mathcal R_{G_0}$.
We regard $\mathcal R_{G_0}$ as the natural domain of definition
of these operators.

\bigskip\noindent
Note that if $f$ is globally defined on $G_0$, then
the condition $L_Xf+R_Xf=0$ for all $X\in \mathfrak g_0$ just means
that $f$ is conjugation invariant, i.e., $f(g_0gg_0^{-1})=f(g)$
for all $g_0\in G_0$. If $G_0$ is reductive, which we assume
from now on, it follows that globally defined conjugation 
invariant superfunctions are completely determined by their 
restrictions to any given maximal torus $H$.  We fix such a 
maximal torus and let $\mathfrak h$ denote its Lie algebra in 
$\mathfrak g_0$. Let us also
assume, as will be the case in all applications,
that $\mathfrak h$ is a Cartan algebra of the full Lie superalgebra 
$\mathfrak g$.   
\subsubsection* {Restriction theorem at the Lie superalgebra level}
At the infinitesimal level we are interested in understanding
the image of the restriction map from the space 
$S(\mathfrak {g}^*)^{\mathfrak g}$ of $\mathrm {ad}_{\mathfrak g}$-invariant
(super) polynomials on $\mathfrak g$ to the space $S(\mathfrak {h}^*)$.
We view an element $P\in S(\mathfrak {g}^*)$ as a (holomorphic)
polynomial map $P:\mathfrak g_0\to \wedge \mathfrak {g}_1^*$.
If $U$ is a sufficiently small neighborhood of $0\in \mathfrak {g}_0$
which is identified by the exponential map with a neighborhood
$V$ of $\mathrm {Id}$ in $G_0$, then 
$P\in S(\mathfrak {g}^*)^{\mathfrak g}$ if and only if the resulting
function on $V$ is radial.  

\bigskip\noindent
Since it has been assumed that
$\mathfrak h$ is an even subspace of $\mathfrak g$, polynomials
in $S(\mathfrak {h}^*)$ are just standard (numerically valued)
polynomials. Since $P\in S(\mathfrak {g}^*)^{\mathfrak g}$ is invariant
by the adjoint representation of $G_0$, it is immediate that
$R(P)$ is invariant under the Weyl group 
$W=W(\mathfrak {g}_0,\mathfrak {h}_0)$.  Thus we regard $R$ as
a map $R:S(\mathfrak {g}^*)^{\mathfrak g}\to S(\mathfrak {h}^*)^W$.

\bigskip\noindent
A great deal is known about the restriction morphism $R$.  
In particular, it is always injective. In our
cases of interest, the basic results are proved in \cite {B}
and \cite {S}.  Let us quote Berezin's Theorem 3.1.
\begin {theorem}\label{extension}   
Let $G_0$ be reductive and assume that $\mathfrak h$ is a Cartan
algebra of $\mathfrak g$ which is contained in $\mathfrak g_0$.
Assume further that $\mathfrak g$ is endowed with a nondegenerate
invariant scalar product and that its odd root spaces are 
1-dimensional. Then a $W$-invariant polynomial is in the image
of $R$ if and only if for every odd root $\beta $ (with dual
root $\beta ^*$ ) it follows that
\begin {equation}\label{divisibility}
\frac{d}{dt}\Big\vert_{t=0}P(h+t\beta^*)=\beta (h)Q(h)
\end {equation}
where $Q=Q(h)$ is a polynomial on $\mathfrak h$.
\end {theorem}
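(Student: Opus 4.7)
The plan is to prove the two implications separately. Necessity is a direct Grassmann-envelope calculation from the $\mathfrak g$-invariance of $\tilde P$; sufficiency, which is the deeper content, requires constructing an invariant extension of $P$ from the divisibility data.

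For necessity, suppose $\tilde P\in S(\mathfrak g^*)^{\mathfrak g}$ restricts to $P$. Fix an odd root $\beta$ and pick nonzero $E_{\pm\beta}\in\mathfrak g_{\pm\beta}$ with $[E_\beta,E_{-\beta}]=\beta^*$, where $\beta^*\in\mathfrak h$ is the element identified with $\beta$ via the invariant form. Working in the Grassmann envelope with an auxiliary odd generator $\alpha$, the element $\alpha E_\beta$ is even, and $\mathfrak g$-invariance of $\tilde P$ yields
\[
\tilde P\bigl(\exp(t\alpha E_\beta)\cdot(h+sE_{-\beta})\bigr)=\tilde P(h+sE_{-\beta})
\]
for an even parameter $t$ and an odd parameter $s$. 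Because $\alpha^2=0$, the adjoint series for $\exp(t\alpha E_\beta)$ terminates and its action on $h+sE_{-\beta}$ equals $h+sE_{-\beta}-t\alpha\beta(h)E_\beta+ts\alpha\beta^*$. Taylor-expanding $\tilde P$ around $h$ and equating the $ts\alpha$-coefficients produces, up to sign,
\[
(\partial_{\beta^*}\tilde P)(h)=\beta(h)\cdot(\partial_{E_{-\beta}}\partial_{E_\beta}\tilde P)(h);
\]
restricting to $\mathfrak h$ then gives the required identity with $Q$ the restriction of the second-order mixed derivative.

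For sufficiency, start with $P\in S(\mathfrak h^*)^W$ satisfying the divisibility condition for every odd root and construct $\tilde P\in S(\mathfrak g^*)^{\mathfrak g}$ by induction on polynomial degree. The classical Chevalley restriction theorem first produces a $\mathfrak g_0$-invariant polynomial $\tilde P_0$ on $\mathfrak g_0$ with $\tilde P_0|_{\mathfrak h}=P$. I then extend $\tilde P_0$ to a polynomial $\tilde P$ on all of $\mathfrak g$ by successively adding corrections polynomial in the odd coordinates and chosen so that $\mathrm{ad}(E_\beta)\tilde P=0$ for every odd root $\beta$. Reading the necessity calculation backwards, the obstruction to such a correction at each stage lies in the ideal generated by the $\beta(h)$, and the divisibility hypothesis is precisely what permits the recursion to be solved with polynomial coefficients, without poles along $\beta=0$.

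The main obstacle is ensuring that corrections imposed to enforce invariance under one odd root do not destroy invariance under another. This compatibility rests on the $W$-invariance of $P$, the 1-dimensionality of the odd root spaces, and the nondegeneracy of the invariant scalar product, which together allow the recursion to close up consistently. An alternative and potentially cleaner organization would be to construct $\tilde P$ via a supersymmetric analogue of the Harish-Chandra homomorphism, whose image is characterized exactly by the stated divisibility condition; I would fall back on this if the direct recursion becomes unwieldy.
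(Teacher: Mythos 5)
First, a point of reference: the paper does not prove this statement at all --- it is quoted verbatim as Berezin's Theorem 3.1 from \cite{B}, and the only proof-like material nearby is the sketch for the holomorphic analogue, which says that for sufficiency ``Berezin uses the fact that the given polynomial $P$ can be extended to a unique $\mathrm{Ad}_{G_0}$-invariant polynomial on $\mathfrak{g}_0$ and then proceeds by using generalities on superfunctions.'' Your \emph{necessity} argument is correct and is essentially the standard one: normalizing $[E_\beta,E_{-\beta}]=\beta^*$, passing to the Grassmann envelope with an auxiliary odd $\alpha$ so that $\exp(t\,\mathrm{ad}(\alpha E_\beta))$ truncates at first order, and extracting the $ts\alpha$-coefficient of the invariance identity does give $\partial_{\beta^*}\tilde P=\pm\,\beta(h)\,\partial_{E_{-\beta}}\partial_{E_\beta}\tilde P$ along $\mathfrak h$, with $Q$ the restriction of the mixed second derivative (which is indeed a polynomial on $\mathfrak h$, being essentially the coefficient of $E_\beta^*\wedge E_{-\beta}^*$ in $\tilde P$). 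That half I would accept.

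The sufficiency half, however, has a genuine gap, and it sits exactly at the point you yourself flag as ``the main obstacle.'' After the Chevalley step you propose to add corrections in the odd coordinates, root by root, and you assert (i) that the obstruction at each stage lies in the ideal generated by the $\beta(h)$, and (ii) that the corrections for different odd roots, and their interaction with the even root vectors, ``close up consistently'' thanks to $W$-invariance, one-dimensionality of odd root spaces, and nondegeneracy of the form. Neither claim is established: the necessity computation read backwards only controls the variation along $\mathfrak h + \mathfrak g_\beta + \mathfrak g_{-\beta}$, not invariance at general points of $\mathfrak g$, and a correction term $c(h)E_\beta^*E_{-\beta}^*$ introduced for one odd root is moved around by $\mathrm{ad}(E_{\beta'})$ and by $\mathrm{ad}(\mathfrak g_0)$, producing new obstructions whose vanishing is the entire content of the theorem. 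Saying the hypotheses ``allow the recursion to close'' is a restatement of the conclusion, not a proof, and your stated willingness to ``fall back'' on an unconstructed Harish--Chandra-type homomorphism confirms that this direction is a plan rather than an argument. For comparison, the actual proofs in \cite{B} and \cite{S} do not run this recursion blind: they either exhibit explicit generators (the supersymmetric power sums $\mathrm{STr}(X^k)$, each manifestly in the image of $R$) and prove that these generate the algebra cut out by the divisibility conditions, or they extend $P$ invariantly over the regular set using the local product structure and use divisibility to cancel the poles, supported by products of the $\beta(h)$, along the non-regular locus. Some argument of this kind is needed to complete your sufficiency step.
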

We say that the \emph{extendible polynomials} are exactly those
which are $W$-invariant and satisfy the \emph{divisibility condition}
(\ref{divisibility}).  
Since our work here is aimed at understanding properties of
radial functions and operators in the cases of $\mathfrak {gl}$
and $\mathfrak {osp}$, it should be emphasized that the conditions
of Berezin's theorem are fulfilled in those cases.

\bigskip\noindent
For the statement of the version of Berizin's extension theorem 
for holomorphic functions we say that a holomorphic superfunction on
$\mathfrak g_0$ is radial if and only if it is annihilated
by the all $\mathrm {ad}_\mathfrak g$-derivations. The divisibility
condition for holomorphic functions is the same as that for
polynomials.  
\begin {theorem}
Under the assumptions of Theorem \ref{extension} 
it follows that a $W$-invariant function
$f\in \mathcal O(\mathfrak h)$ can be (uniquely) extended to
a radial holomorphic superfunction on $\mathfrak {g}_0$ if and only if
it satisfies the divisibility condition.
\end {theorem}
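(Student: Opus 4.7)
The plan is to reduce the holomorphic statement to the polynomial statement (Theorem~\ref{extension}) by means of Taylor expansion, and then to close the argument by a sup-norm estimate that bounds the extension of a homogeneous polynomial by its restriction to $\mathfrak{h}$. Uniqueness of the extension is immediate: if $F$ and $F'$ were two radial holomorphic superfunctions on $\mathfrak{g}_0$ with the same restriction, then the Taylor components of $F-F'$ at $0$ would be $\mathrm{ad}_\mathfrak{g}$-invariant polynomials (valued in $\wedge\mathfrak{g}_1^*$) whose restrictions to $\mathfrak{h}$ all vanish, contradicting the injectivity of $R$ at the polynomial level.

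For necessity, suppose $F$ is a radial holomorphic superfunction on $\mathfrak{g}_0$ with $F|_{\mathfrak{h}}=f$. Expand $F=\sum_k F_k$ in homogeneous components of its Taylor series at $0$. Each $F_k$ is an $\mathrm{ad}_{\mathfrak{g}}$-invariant polynomial, hence its restriction $f_k:=R(F_k)$ is $W$-invariant and, by Theorem~\ref{extension}, satisfies $\partial_{\beta^*}f_k(h)=\beta(h)Q_{k-1}(h)$ for some polynomial $Q_{k-1}$. Summing over $k$ yields the divisibility condition for $f$ with holomorphic quotient $Q=\sum_k Q_k$, where convergence follows from the convergence of the Taylor expansion of $\partial_{\beta^*}f$.

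For sufficiency, start with a $W$-invariant $f\in\mathcal{O}(\mathfrak{h})$ satisfying the divisibility condition $\partial_{\beta^*}f=\beta\cdot Q$ for every odd root $\beta$, with $Q$ holomorphic. Expand $f=\sum_k f_k$ and $Q=\sum_k Q_k$ as Taylor series at $0$. Equating homogeneous components gives $\partial_{\beta^*}f_k=\beta\cdot Q_{k-1}$, so each $f_k$ is a $W$-invariant homogeneous polynomial satisfying the polynomial divisibility condition. By Theorem~\ref{extension}, $f_k$ extends uniquely to an $\mathrm{ad}_{\mathfrak{g}}$-invariant polynomial $F_k$ on $\mathfrak{g}_0$ valued in $\wedge\mathfrak{g}_1^*$. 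Set $F:=\sum_k F_k$; if this converges locally uniformly it defines a radial holomorphic superfunction with the required properties, since a locally uniform limit of $\mathrm{ad}_{\mathfrak{g}}$-invariant holomorphic maps is $\mathrm{ad}_{\mathfrak{g}}$-invariant.

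The main obstacle is precisely this convergence. What must be proved is an estimate of the form
\[
\sup_{\|X\|\le R}\|F_k(X)\|\;\le\;C(R)\,\sup_{\|H\|\le R}|f_k(H)|,
\]
with $C(R)$ independent of $k$, after fixing a $K_0$-invariant Hermitian norm on $\mathfrak{g}_0$ (for a compact real form $K_0\subset G_0$) and a norm on $\wedge\mathfrak{g}_1^*$. The natural route is to exploit $\mathrm{ad}$-invariance of $F_k$ together with the fact that the closure of every $G_0$-orbit in $\mathfrak{g}_0$ meets $\mathfrak{h}$ in a $W$-orbit whose diameter is controlled by the norm of $X$; in the reductive setting this reduces to the familiar bound on the eigenvalues of $X$ in a faithful representation, which works transparently in the $\mathfrak{gl}$ and $\mathfrak{osp}$ cases of interest. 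Once this estimate is in place, the convergence of $\sum F_k$ on any ball in $\mathfrak{g}_0$ follows term-by-term from the convergence of $\sum f_k$ on the corresponding ball in $\mathfrak{h}$, completing the proof.
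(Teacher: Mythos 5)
Your uniqueness and necessity arguments are sound, and the overall strategy for sufficiency (Taylor expansion at $0$, term-by-term application of Theorem \ref{extension}, resummation) is a legitimate alternative to the paper's route. But the sufficiency direction has a genuine gap at exactly the point you flag, and the sketch you give of the "natural route" does not close it. The estimate $\sup_{\|X\|\le R}\|F_k(X)\|\le C(R)\sup_{\|H\|\le R}|f_k(H)|$ is supported by your orbit-closure/eigenvalue argument only for the \emph{numerical} component of $F_k$: the fact that the closure of a $G_0$-orbit meets $\mathfrak{h}$ with controlled norm bounds the unique $\mathrm{Ad}_{G_0}$-invariant extension of $f_k$ to $\mathfrak{g}_0$, i.e.\ the $\wedge^0\mathfrak{g}_1^*$-part, and says nothing about the higher components of $F_k$ in $\wedge\mathfrak{g}_1^*$. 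Those components are manufactured from $f_k$ via the divisibility data --- essentially from the quotients $\partial_{\beta^*}f_k/\beta$ and their iterates --- so bounding them requires a Cauchy estimate for the derivative and a division estimate along $\{\beta=0\}$, both of which force a loss of radius and constants depending on $k$. The estimate as you state it, with the same radius on both sides and $C(R)$ independent of $k$, is therefore not available; and without an explicit handle on how Berezin builds the odd components from $R(F_k)$ you have not shown that the weakened estimates that \emph{are} available still yield a positive radius of convergence for $\sum_k F_k$.

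For comparison, the paper avoids the convergence problem entirely: for sufficiency it extends the holomorphic function $f$ \emph{as a whole} to an $\mathrm{Ad}_{G_0}$-invariant holomorphic function on $\mathfrak{g}_0$ (the holomorphic form of the Chevalley restriction theorem) and then runs Berezin's superfunction construction on that extension; since that construction is differential-algebraic it preserves holomorphy, and no term-by-term resummation is needed. If you want to retain your approach, the missing ingredient is a quantitative version of the polynomial extension theorem: an explicit bound for all $\wedge\mathfrak{g}_1^*$-components of $F_k$ on a ball of radius $r<R$ in terms of $\sup_{\|H\|\le R}|f_k|$, with constants of at most geometric growth in $k$.
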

{\it Sketch of Proof}. For the \emph{neccessity}, i.e., that the
divisibility condition is really needed for extension, one
replaces the polynomial $P$ in Berezin's proof by the convergent
power series representation of the given function $f$ at 
$0\in \mathfrak h$.  For the \emph{sufficiency} Berezin uses
the fact that the given polynomial $P$ can be extended 
to a unique $\mathrm {Ad}_{G_0}$-invariant polynomial on $\mathfrak {g}_0$
and then proceeds by using generalities on superfunctions.  
Since $W$-invariant holomorphic functions on $\mathfrak h$ extend to 
$\mathrm{Ad}_{G_0}$-invariant holomorphic functions on $\mathfrak {g}_0$,
the same proof can be carried out in the holomorphic case.\qed
%
%
%
%
\subsubsection* {Restriction theorem at the group level}
Now let us turn to the Lie supergroup associated to $\mathfrak g$
equipped with its sheaf $\mathcal F$ of holomorphic superfunctions, i.e.,
the sheaf of germs of holomorphic maps with values in
$\wedge \mathfrak {g}_1^*$.  We assume that $\mathfrak g$ satisfies
the assumptions of Theorem \ref{extension} andlet $H$ be the maximal
(complex) torus in $G_0$ associated to $\mathfrak h$. In this situation
we wish to determine, e.g., the image of the restriction map
$R:\mathcal R_{G_0}(G_0)\to \mathcal O(H)^W$ from the globally
defined holomorphic radial functions on $G_0$ in the algebra
of $W$-invariant holomorphic functions on $H$.

\bigskip\noindent
For this the divisibility condition must be transferred to
the group level: A holomorphic function $f\in \mathcal O(H)$
is said to satisfy the divisibility condition if and only if
its pull-back $f\circ \mathrm {exp}$ satisfies the divisibility
condition on $\mathfrak h$.  The following is an immediate consequence
of the results in the previous paragraph.
\begin {proposition}  
A holomorphic function $f\in \mathcal O(H)^W$ satisfies the divisibility
condition if and only if its lift $f\circ \mathrm {exp}$ is the
restriction of a uniquely determined radial
holomorphic superfunction on $\mathfrak g$.
\end {proposition}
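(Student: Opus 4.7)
The plan is to reduce the proposition to the holomorphic version of Berezin's extension theorem proved just above, by transferring all data through the exponential map $\exp:\mathfrak h\to H$.

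First I would fix notation. Write $\tilde f:=f\circ\exp\in\mathcal O(\mathfrak h)$. Since $H$ is a complex torus associated to $\mathfrak h$ and $W=W(\mathfrak g_0,\mathfrak h_0)$ acts linearly on $\mathfrak h$ in a manner intertwined with its action on $H$ by the surjective holomorphic homomorphism $\exp$, $W$-invariance of $f$ on $H$ is equivalent to $W$-invariance of $\tilde f$ on $\mathfrak h$. Moreover, by the very definition of the group-level divisibility condition given in the preceding paragraph, $f$ satisfies it if and only if $\tilde f$ satisfies condition (\ref{divisibility}).

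Next I would apply the holomorphic extension theorem directly to $\tilde f$. For the \emph{if} direction, assume $\tilde f$ satisfies the divisibility condition; then the theorem produces a uniquely determined radial holomorphic superfunction $F$ on $\mathfrak g_0$ with $F|_{\mathfrak h}=\tilde f$, which is precisely the assertion that $f\circ\exp$ is the restriction of such an $F$. For the \emph{only if} direction, if $\tilde f=F|_{\mathfrak h}$ for some radial holomorphic superfunction $F$, the necessity half of the extension theorem forces $\tilde f$ to satisfy (\ref{divisibility}), and hence $f$ satisfies the group-level divisibility condition. Uniqueness of $F$ follows from the injectivity of the restriction morphism in the holomorphic setting, which in turn reduces to the polynomial case by expanding $F$ in a convergent power series at $0\in\mathfrak g_0$.

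I do not expect a substantive obstacle here; the text itself labels the result as an immediate consequence. The only point that requires any care is the verification that $W$-invariance and the divisibility condition transfer cleanly between $H$ and $\mathfrak h$, and both of these are true essentially by construction: $\exp$ is a $W$-equivariant local biholomorphism near $0\in\mathfrak h$, and the group-level divisibility condition was defined by pullback through $\exp$ for exactly this purpose. Hence the proposition follows by direct invocation of the previous holomorphic extension theorem applied to $\tilde f$.
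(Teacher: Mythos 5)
Your proposal is correct and matches the paper's intent exactly: the paper states this proposition without a separate proof, calling it ``an immediate consequence of the results in the previous paragraph,'' namely the holomorphic version of Berezin's extension theorem on $\mathfrak h$ combined with the fact that the group-level divisibility condition is \emph{defined} by pulling back through $\exp$. Your transfer of $W$-invariance and divisibility along $\exp$ and the direct invocation of that theorem is precisely the intended argument.
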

The extension theorem at the group level is stated as expected.
\begin {theorem}\label{group level extension}
Let $\mathfrak g$ be a Lie superalgebra which satisfies the
conditions of Theorem \ref{extension} and let $G_0$ be a base
of an associated Lie supergroup. Fix a Cartan algebra $\mathfrak h$
in $\mathfrak g$ and let $H=\mathrm {exp}(\mathfrak h)$.  Then
a $W$-invariant holomorphic function $f\in \mathcal O(H)$ is the restriction
of a radial holomorphic superfunction on $G_0$ if and only if
it satisfies the divisibility condition.
\end {theorem}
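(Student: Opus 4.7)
The plan is to reduce the group-level statement to the Lie-algebra-level extension result via the exponential map and then to globalize the local extension near $\mathrm{Id}$ using conjugation-invariance together with the Chevalley-type restriction theorem for complex reductive $G_0$.

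For \emph{necessity}, suppose $F$ is a radial holomorphic superfunction on $G_0$ with $F|_H = f$. Choose a neighborhood $U$ of $0\in\mathfrak g_0$ on which $\exp:U\to V=\exp(U)\subset G_0$ is biholomorphic. The pull-back $\exp^{*}F$ is a radial holomorphic superfunction on $U$ whose restriction to $U\cap\mathfrak h$ is $f\circ\exp$. The preceding proposition then forces $f\circ\exp$ to satisfy the divisibility condition in a neighborhood of $0\in\mathfrak h$, and hence $f$ satisfies it on $H$ by definition.

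For \emph{sufficiency}, assume $f\in\mathcal O(H)^W$ satisfies the divisibility condition. Then $f\circ\exp$ is a $W$-invariant holomorphic function on $\mathfrak h$ satisfying divisibility, so by the preceding proposition it extends uniquely to a radial holomorphic superfunction $\tilde F$ on $\mathfrak g_0$. Transporting $\tilde F$ to $V$ via the biholomorphism $\exp$ produces a radial holomorphic superfunction $F_V$ on $V$ that restricts to $f$ on $V\cap H$. To globalize $F_V$, note that by reductivity of $G_0$ the regular-semisimple locus $G_0^{\mathrm{reg}}$ is open and dense, and every element of $G_0^{\mathrm{reg}}$ is conjugate into $V\cap H$. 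Writing $g=g_0hg_0^{-1}$ with $h\in V\cap H$ and defining $F(g)$ as the conjugation-pullback of $F_V(h)$ by $g_0$, the $W$-invariance of $f$ together with the uniqueness from the Lie algebra extension ensures well-definedness independent of the choices.

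To cross the non-regular locus one uses the Chevalley-type theorem for complex reductive groups: the restriction map $\mathcal O(G_0)^{G_0}\to\mathcal O(H)^W$ is an isomorphism, so the numerical part of $F$ extends uniquely to a holomorphic conjugation-invariant function on $G_0$. The higher Grassmann coefficients, which by Berezin's divisibility-based procedure are explicit holomorphic combinations of $f$ and its derivatives on $H$, extend coefficient-by-coefficient by the same mechanism. The radial equations $L_XF+R_XF=0$ for odd $X$ then propagate from $G_0^{\mathrm{reg}}$ to all of $G_0$ by the identity principle, and uniqueness of $F$ is immediate from the identity principle on $H$. The main obstacle is precisely verifying that the coefficient-wise extensions are compatible with the coupled system of radial equations across the non-regular locus; this is where the divisibility hypothesis is essential, since it is exactly what makes the candidate coefficient formulas, whose apparent denominators are the odd root functions $\beta(h)$, holomorphic rather than merely meromorphic.
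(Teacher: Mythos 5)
Your necessity argument is essentially the paper's: pull back by $\exp$ near the identity and invoke the Lie-algebra-level proposition. The sufficiency argument, however, has two genuine gaps. First, the globalization by conjugation rests on the claim that every element of $G_0^{\mathrm{reg}}$ is conjugate into $V\cap H$; this is false. A regular semisimple element is conjugate into $H$, but not into a prescribed small neighborhood $V\cap H$ of the identity in $H$ (a diagonal matrix in $\mathrm{GL}_n$ far from $\mathrm{Id}$ is already a counterexample). So the conjugation-pullback of $F_V$ only defines $F$ on the saturation $\{g_0hg_0^{-1}:\ h\in V\cap H,\ g_0\in G_0\}$ and does not reach the rest of $G_0^{\mathrm{reg}}$. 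Second, the step that would repair this --- producing a $W$-invariant $\wedge\mathfrak g_1^*$-valued holomorphic superfunction on \emph{all} of $H$ to which the coefficient-wise Chevalley isomorphism $\mathcal O(G_0)^{G_0}\cong\mathcal O(H)^W$ can be applied --- is precisely what you do not supply. You assert that the higher Grassmann coefficients are ``explicit holomorphic combinations of $f$ and its derivatives on $H$''; no such universal formulas are established, either by you or by the paper, and this assertion is doing all the work.

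The paper's route around both difficulties is to restrict the global Lie-algebra extension $E(\hat f)$ to the \emph{entire} Cartan algebra $\mathfrak h$ (not just to a neighborhood of $0$), to use the uniqueness of that extension to show that $E(\hat f)\vert_{\mathfrak h}$ is periodic with respect to $\ker(\exp\vert_{\mathfrak h})$, and thereby to descend it to a $W$-invariant superfunction $f_s$ on $H$; only then does classical invariant theory yield the conjugation-invariant extension $E(f)$ on $G_0$, whose radiality is checked near the identity by transport from $\mathfrak g_0$ and then propagated by the identity principle. Relatedly, your final identity-principle step should start from the neighborhood $V$ of the identity, where the odd equations $(L_X+R_X)F=0$ are actually known, rather than from $G_0^{\mathrm{reg}}$, where you have not yet established them.
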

\begin {proof}
To prove the sufficiency of the divisibility condition we
let $f$ be a $W$-invariant (numerical) holomorphic function on $H$ which
satisfies the divisibility condition and $\hat f=f\circ \mathrm {exp}$.
Since $\hat f$ satisfies the divisibility condition on $\mathfrak h$,
it is the restriction of a (unique) holomorphic radial superfunction 
$E(\hat f)$ on $\mathfrak g_0$. This means that $\hat f$ is the numerical
part of $E(\hat f)\vert \mathfrak h$. The uniqueness of the extension
$E(\hat f)$ implies that $E(\hat f)\vert \mathfrak h$ is periodic
with respect to the discrete additive subgroup of $\mathfrak h$
which is the kernel of $\mathrm{exp}:\mathfrak h\to H$. Thus 
there is a $\wedge \mathfrak {g}_1 ^*$-valued
holomorphic function $f_s$ on $H$ with 
$E(\hat f)\vert \mathfrak h=f_s \circ \mathrm{exp}$.  

\bigskip\noindent
Observe that
since $E(\hat f)$ is invariant by conjugation by elements of
the normalizer of $H$, it follows that its restriction to $\mathfrak h$
is $W$-invariant and thus $f_s$ is $W$-invariant. Classical 
invariant theory then implies that $f_s$ is
the restriction of a unique conjugation invariant superfunction
$E(f)$ on $G_0$. The function $E(f)\circ {exp}$ is an 
$\mathrm {Ad}_{G_0}$-invariant holomorphic superfunction which
agrees with $E(\hat f)$ on $\mathfrak h$. Thus 
$E(f)\circ {exp}$ \emph{is} the radial extension of $\hat f$.
In particular, if $U$ is a neighborhood of $0\in \mathfrak {g}_0$
such that $\mathrm {exp}:U\to V$ is biholomorphic, then
the fact that the operators $L_X+R_X$ annihilate $E(\hat f)\vert U$
for all $X\in \mathfrak g$ implies that they annihilate $E(f)\vert V$
for all $X\in \mathfrak g$.  Hence the identity principle implies
that $(L_X+R_X)E(f)=0$ for all $X\in \mathfrak g$ and consequently
$E(f)$ is the desired radial extension of $f$.

\bigskip\noindent
For the necessity of the divisibility condition, we just reverse the
argument: If $E(f)$ is the radial extension of $f$, then 
$E(f)\circ \mathrm {exp}\vert U$ is annihilated by the
operators $L_X+R_X$ and the identity principle implies that
$E(f)\circ \mathrm {exp}$ is the radial extension of $\hat f$. 
Consequently $\hat f$
satisfies the divisibility condition which by definition is
the divisibility condition for $f$.
\end {proof}
\subsection {Jacobian formula}
Recall that we have regarded the Laplace-Casimir operators
as being differential operators $D:\mathcal R\to \mathcal R$
on the sheaf of radial holomorphic superfunctions on the Lie
supergroup associated to a Lie superalgebra $\mathfrak g$.
Here we restrict our considerations to the setting of
Theorem \ref{extension} so that the algebra of global radial functions
is decribed by the divisibility condition along a given maximal
torus $H$ in $G_0$. Note that divisibility at the group level means
that for every odd root $\beta $ the directional derivative $X_\beta (f)$
is divisible by $(r_\beta -1)$ where $r_\beta $ is the character
associated to $\beta $.

\bigskip\noindent
Denote by $\mathcal D_H$ the image in $\mathcal O(H)^W$
of the restriction map $R$ and $E:\mathcal D_H\to \mathcal R(G_0)$
the extension map which is $R^{-1}$. The associated \emph{radial part}
of a Laplace-Casimir operator $D$ on $H$ is defined as 
$\dot D:=RDE:\mathcal D_H\to \mathcal D_H$. Our goal in this paragraph
is to describe a basic result of Berezin which, at least for 
the Lie superalgebras and operators of main interest for the applications
in \cite{CFZ} and \cite{HPZ}, shows that the study of the
radial operators $\dot D$ can be reduced to analyzing certain
constant coefficient polynomial differential operators on $H$.
This is proved by introducing a sort of change of variables along
$H$ so that generically along $H$ one has a local product
decomposition in the $H$-direction and the transversal direction of the
supergroup action. As a consequence, a Jacobian $J$ appears and therefore
we refer to the result as the Jacobian formula.

\bigskip\noindent
The Jacobian $J$ is defined as follows as a meromorphic function on 
$H$: For $\zeta =\mathrm {exp}(t)\in H$    
\begin{displaymath}
    J(\zeta ) := \frac{\prod_{\alpha\in\Delta_0^+} 2\sin\frac{\alpha(
    t)}{2}} {\prod_{\beta\in\Delta_1^+} 2 \sin\frac{\beta(t)}
    {2}}
\end{displaymath}
where $\Delta^+ = \Delta_0^+ \cup \Delta_1^+$ is a system of even and
odd positive roots.  Under certain restrictive conditions the
Jacobian formula states that given a Laplace-Casimir operator $D$
there is a uniquely defined polynomial operator with constant 
coefficients $P_D$ on $H$ so that $\dot D=J^{-1}P_DJ$.  It should
be remarked that in the classical case of Lie groups the analogous
formula (without the odd roots in the Jacobian) holds in great 
generality.  In the Lie supergroup setting we state it in the
cases of $\mathfrak g=\mathfrak {gl}(m,n)$ and $\mathfrak {osp}(2m,2n)$.
In \cite{B} the latter Lie superalgebra is denoted by $C(m.n)$. From now on
$\mathfrak g$ is restricted to be one of these (complex) Lie superalgebras
equipped with the nongenerate bilinear form $(X,Y)=\mathrm {STr}(XY)$.

\bigskip\noindent
In the case of $\mathfrak {gl}(m,n)$ we choose $\mathfrak h$ to
be the Cartan algebra of diagonal matrices with coordinates
\begin {gather*}
h=
\begin {pmatrix}
\varphi & 0\\
0 & \psi 
\end {pmatrix}
.
\end {gather*}
One checks that the polynomial function 
$$
F_k=\mathrm {STr}(h^k)=\sum \varphi_i^k-\sum \psi_j^k
$$
satisfies the divisibility condition and therefore is extendible
to an $\mathrm {ad}$-invariant element of $S(\mathfrak g^*)$, i.e.,
to an element of $Z(\mathfrak g)$.  

\bigskip\noindent
In the case of $\mathfrak {osp}(2m,2n)$ we recall that 
$\mathfrak g_0=\mathfrak {so}_{2m}\oplus \mathfrak {sp}_{2n}$ and
as above choose $\mathfrak h$ to be in diagonal form with coordiantes
\begin {gather*}
h_{\mathfrak {so}}=
\begin {pmatrix}
\varphi & 0\\
0 & -\varphi 
\end {pmatrix}
\end {gather*}
and
\begin {gather*}
h_{\mathfrak {sp}}=
\begin {pmatrix}
\psi & 0\\
0 & -\psi 
\end {pmatrix}
\end {gather*}
with the full Cartan algebra given by 
$h=\mathrm {Diag}(h_{\mathfrak {so}},h_{\mathfrak {sp}})$. In this case
one defines the extendible polynomials
$$
F_k:=\frac{1}{2}\mathrm{STr}(h^{2k})=\sum\varphi_i^{2k}-\sum\psi_j^{2k}\,.
$$
In the case of $\mathfrak {gl}(m,n)$ the algebra of extendable polynomials
consists of polynomials with constant coefficients in the $F_k$.
In the case of $\mathfrak {osp}(2m,2n)$ one requires one additional
generator which is most conveniently chosen as 
$L=\varphi _1\cdot \ldots \cdot \varphi_mR$ where $R$ is the product
of the odd positive root functions.  For the sake of brevity of
notation we let
$$ 
\tilde F_k=F_k(\frac{1}{i}\frac{\partial}{\partial t})
$$
be the constant coefficient differential operator defined by $F_k$.
A simplified version of Berezin's Jacobian theorem 
(see Theorem 3.2 on p. 302 of \cite {B}) can be stated as follows. 
\begin {theorem} \label{jacobian formula}
Let $\mathfrak g$ be either $\mathfrak {gl}(m.n)$ or 
$\mathfrak {osp}(2m,2n)$ and $P$ be an extendible polynomial
on $\mathfrak h$ which defines the radial differential operator
$\dot D_P$ by $\dot D_P(f)=RD_PE(f)$ for $f\in \mathcal D_H$ a $W$-invariant
function satisfying the divisibility condition.  Then there exists a
uniquely determined polynomial function $T(P)$ on $\mathfrak h$ with associated
constant coefficient differential operator denoted by
$T(P)(\frac{1}{i}\frac{\partial}{\partial t})$ so that
$$
\dot D_P=J^{-1}T(P)(\frac{1}{i}\frac{\partial}{\partial t})J\,.
$$
Furthermore, if $P=F_k$, then $T(P)$ is of degree $k$ with top degree
term $F_k$.
\end {theorem}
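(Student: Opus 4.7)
The plan is to deduce the theorem from the general Jacobian formula of the appendix, applied to the two specific superalgebras, and then to pin down the top-degree behavior of $T(F_k)$ by a symbol calculation.

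First, I would verify that the hypotheses of the appendix's Jacobian formula are fulfilled in both cases. The bilinear form $(X,Y) = \mathrm{STr}(XY)$ is nondegenerate, invariant, and supersymmetric on both $\mathfrak{gl}(m,n)$ and $\mathfrak{osp}(2m,2n)$, so the second-order Laplace-Casimir operator $\Delta$ is well defined. The critical requirement is that the function $J$ is an eigenfunction of $\Delta$ along $H$. Using the product formula for $J$ and the standard identity for $\Delta$ acting on $H$ written in terms of even and odd root derivatives, one computes directly that $\Delta J = \lambda J$ with $\lambda = (\rho_0,\rho_0) - (\rho_1,\rho_1)$, where $\rho_0$ and $\rho_1$ are the half-sums of positive even and odd roots. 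This is where the assumption that odd root spaces are one-dimensional and the choice of $\mathrm{STr}(XY)$ enter. Once this is checked, the appendix's general result immediately provides, for every extendible polynomial $P$, a constant coefficient operator $P_{D_P}$ on $H$ with $\dot D_P = J^{-1} P_{D_P} J$. Define $T(P)$ to be the polynomial on $\mathfrak{h}$ whose associated constant coefficient operator is $P_{D_P}$; uniqueness is automatic, since the Fourier correspondence between polynomials on $\mathfrak{h}$ and constant coefficient operators on $H$ is injective.

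Second, I would prove the degree statement for $P = F_k$. The element $D_{F_k} \in Z(\mathfrak{g})$ is obtained from $F_k \in S(\mathfrak{g}^*)^{\mathfrak{g}}$ by the invariant symmetrization map, so as a differential operator on $G_0$ it has order exactly $k$, with principal symbol equal to $F_k$ restricted to $\mathfrak{h}$. Passing to the local product neighborhood $\mathcal{A} \times \mathcal{B}$ of a regular point in $H$, as in the appendix, the radial part $\dot D_{F_k}$ is the restriction of $D_{F_k}$ to $H$-derivatives modulo derivatives in the transverse direction, so its principal symbol along $H$ is again $F_k$. On the other hand, conjugation by $J$ produces only lower-order commutator corrections of the shape $J^{-1}[\partial^\alpha, J]$, so the top-degree symbol of $P_{D_{F_k}} = T(F_k)$ coincides with that of $\dot D_{F_k}$, namely $F_k$. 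This gives both the bound $\deg T(F_k) = k$ and the identification of the leading term.

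The main obstacle I anticipate is the verification of the eigenfunction property $\Delta J = \lambda J$: it requires combining the explicit root data of $\mathfrak{gl}(m,n)$ and $\mathfrak{osp}(2m,2n)$ with the partial cancellations coming from the odd root factors in the denominator of $J$, and is the one step that genuinely uses the specific structure of the two families rather than general supergroup formalism. A secondary point is handling the additional generator $L = \varphi_1 \cdots \varphi_m R$ in the $\mathfrak{osp}$ case: since the Jacobian formula of the appendix is proved for every extendible polynomial, it applies to $L$ directly, but the degree claim is explicitly stated only for $P = F_k$, so no extra argument is needed for $L$ beyond the general part. Everything else is either the uniqueness argument for $T(P)$ or the symbol computation, both of which are standard once the Jacobian formula itself is in hand.
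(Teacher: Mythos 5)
Your route is essentially the one the paper itself sketches in its appendix: reduce to the local product decomposition $\mathcal A\times\mathcal B$ near a superregular point, feed the two specific superalgebras into the local Jacobian formula (whose only nontrivial hypothesis is that $J$ is an eigenfunction of the second-order Laplace--Casimir operator), and read off uniqueness of $T(P)$ from the injectivity of the polynomial-to-constant-coefficient-operator correspondence. Note that for the global statement the paper simply quotes Berezin's Theorem 3.2 and defers the eigenfunction computation exactly as you do, so on those points you are no worse off than the text. Where you genuinely diverge is the degree claim: you get $T(F_k)=F_k+(\text{lower order})$ by a principal-symbol argument (the symbol of the radial part of $D_{F_k}$ along $H$ is $F_k\vert_{\mathfrak h}$, and conjugation by $J$ only perturbs lower order), whereas the paper obtains it --- in the stronger form $T(F_k)\in F_k+\mathbb C[F_1,\ldots,F_{k-1}]$ --- from the generating-function identities for the eigenvalues on finite-dimensional characters. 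Your argument is more elementary and suffices for the statement as written, though it does not recover the polynomial dependence on $F_1,\ldots,F_{k-1}$ that the paper needs later.

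The one step you pass over too quickly is the globalization, and it is precisely where the paper invests its effort. The appendix's theorem produces, for each superregular $h_0$, a constant-coefficient operator such that the \emph{locally defined} radial part (restriction to $\mathcal A$ of $D$ applied to functions constant along $\mathcal B$) has the form $J^{-1}PJ$ on that neighborhood. The operator $\dot D_P$ in the statement is a different object a priori: it is $RD_PE$ acting on the globally defined algebra $\mathcal D_H$ of extendible $W$-invariant functions, with $E$ the global radial extension. To identify the two you must show (i) that every holomorphic function on a small $V=U\cap H$ extends to a radial superfunction on $U$ compatibly with the global extension --- the paper does this with a density/approximation argument using $W$-invariance and the odd-root product $R$ --- and (ii) that the resulting constant-coefficient operator is the same for every superregular point, which follows from the identity principle since constant-coefficient operators agreeing on an open set agree everywhere. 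Writing ``the appendix's general result immediately provides \dots a constant coefficient operator on $H$'' elides both points; without (i) the local and global radial parts need not coincide, and without (ii) you have a family of local formulas rather than one global one. Once these are supplied the proposal is sound.
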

It should be underlined that the partial derivative operators
$$
\frac{\partial}{\partial t}=(
\frac{\partial}{\partial t_1},\ldots ,\frac{\partial}{\partial t_{m+n}})
$$
are defined by the coordinates $(\varphi, \psi)$ of $\mathfrak h$ 
which were introduced above.

\bigskip\noindent
{\bf Remark.} For applications in \cite{HPZ} the first author was
originally interested in a local version of Theorem \ref{jacobian formula}
at a generic point of a maximal torus.  Using techniques which are
much closer to the methods of Helgason in the classical case, the
second author has proved such a result in his thesis \cite{Ka1}.
As is shown in the sequel, Berezin's global result 
Theorem \ref{jacobian formula} implies the local result.
Vice versa, application of the identity principle shows that
the local result implies Berezin's global result.  Thus the two
results are equivalent. Nevertheless we feel that it is of 
interest to have new viewpoint on these matters.  Hence, the
statement and a sketch proof of the local theorem have been
included as an appendix to this paper.

\bigskip\noindent
Now we turn 
to understanding the mapping $P\mapsto T(P)$.  This will be discussed
for the polynomials $P=F_k$ in both cases $\mathfrak {gl}(m,n)$ and
$\mathfrak {osp}(2m,2n)$. 
\subsection {Finite-dimensional representations}
Here we explain how to compute the polynomials
$T(P)$ in terms of the eigenvalues of the radial operators
$\dot D_{P}$ on characters of \emph{finite-dimensional} 
representations. We restrict to the cases 
$\mathfrak g=\mathfrak {gl}(m,n),\mathfrak {osp}(2m,2n)$, but most
of the discussion applies in a much more general setting, e.g.,
where $\mathfrak {g}_0$ is semisimple. For more details see 
Chapter 3.10 (p.307-311) of \cite {B}.
\subsubsection* {Character formula}
Consider a finite-dimensional irreducible representation $\rho $ of a
complex Lie supergroup associated to one of the Lie
superalgebras $\mathfrak g=\mathfrak {gl}(m,n),\mathfrak {osp}(2m,2n)$. 
This is by definition a homomorphism of the Lie supergroup associated
to $\mathfrak g$ to that associated to Lie superalgebra $\mathfrak {gl}(V)$.
Such is defined by a holomorphic mapping $G_0\to \mathrm {GL}(V)$
which lifts to the sheaf level as a mapping 
$\mathcal {F}_{\mathrm {GL}(V)}\to \mathcal {F}_{G_0}$ which preserves
the defining Lie supergroup triples (see \cite {B}, p.248). Taking
a basis of homogeneous elements of $V$ one interprets 
$\rho $ as a holomorphic map of $G_0$ to matrices whose entries 
are superfunctions.  The character of such a representation is defined by
$\chi (g):=\mathrm {STr}(\rho(g))$. It is a radial superfunction on $G_0$
and  we consider its restriction $\chi $ to a Cartan subgroup $H$.  
It is an eigenfunction of every radial operator $\dot D$.
In other words there is a 
homomorphism $\lambda $ defined on the space of radial differential
operators with values in $\mathbb C$ so that 
$\dot D(\chi )=\lambda (\dot D)\chi $.

\bigskip\noindent
Now apply the Jacobian formula, 
$\dot D_P=JT(P)(\frac{1}{i}\frac{\partial}{\partial t})J^{-1}$,
define $\tilde \chi =J\chi $ and observe that
$$
T(P)(\frac{1}{i}\frac{\partial}{\partial t})\tilde \chi =
\lambda (\dot D_P)\tilde \chi\,.
$$
In other words, the eigenvalue homomorphism for the radial operator
$\dot D_P$ on the character $\chi $ is the same as the eigenvalue
homomorphism for the constant coefficient operator   
$T(P)(\frac{1}{i}\frac{\partial}{\partial t})$ on the function
$\tilde \chi $.  This simple remark leads to an exact description
of $T(P)$ in terms of eigenvalues of irreducible representations.

\bigskip\noindent
For this note that $J$ is defined on
$H$ so that $\tilde \chi $ can be expanded in a Fourier series
$$
\tilde \chi (t)=\sum_{k\in \mathbb Z^{m+n}}a_ke^{i\langle k,t\rangle}\,.
$$
Applying $T(P)(\frac{1}{i}\frac{\partial}{\partial t})$ to both sides
one shows that if $a_k\not=0$, then 
\begin {equation}\label{eigenvalue}
T(P)(k)=\lambda (\dot D_P)
\end {equation}
for \emph{every} extendible polynomial $P$. Letting $P$ range
over all such polynomials one proves the following fact.
\begin {proposition}
The set of lattice elements $k$ such that $a_k\not=0$ is 
a $W$-orbit $W.k_0$.
\end {proposition}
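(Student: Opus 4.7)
The plan is to combine the eigenvalue identity (\ref{eigenvalue}) with the $W$-equivariance of $\tilde\chi$ and Berezin's recursion $T(F_\ell)=F_\ell+Q_\ell$ with $Q_\ell\in\mathbb C[F_1,\ldots,F_{\ell-1}]$ supplied by Theorem \ref{jacobian formula}. First I would show that the support of $\tilde\chi$ is $W$-invariant. The Jacobian $J$ transforms under $W=W(\mathfrak g_0,\mathfrak h_0)$ by a sign character $\epsilon$: the even-root numerator $\prod_{\alpha\in\Delta_0^+}2\sin\frac{\alpha(t)}{2}$ is the classical $W$-anti-invariant Weyl factor, while the odd-root denominator $\prod_{\beta\in\Delta_1^+}2\sin\frac{\beta(t)}{2}$ is $W$-invariant because $W$ permutes $\Delta_1^+$ up to signs that cancel in the product. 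Since $\chi$ is $W$-invariant, $\tilde\chi\circ w^{-1}=\epsilon(w)\tilde\chi$; matching Fourier coefficients yields $a_{wk}=\epsilon(w)^{-1}a_k$, and in particular $\{k:a_k\neq 0\}$ is $W$-stable.

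Next I would propagate the eigenvalue equation. Let $k_1,k_2$ lie in the support. For every extendible polynomial $P$, (\ref{eigenvalue}) gives $T(P)(k_1)=\lambda(\dot D_P)=T(P)(k_2)$. Specialize to $P=F_\ell$ and use the triangular structure $T(F_\ell)=F_\ell+Q_\ell(F_1,\ldots,F_{\ell-1})$, inducting on $\ell$: at $\ell=1$, $Q_1$ is constant so $F_1(k_1)=F_1(k_2)$; inductively, the equality of the lower $F_j$ forces $Q_\ell(F_1(k_1),\ldots)=Q_\ell(F_1(k_2),\ldots)$, hence $F_\ell(k_1)=F_\ell(k_2)$. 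In the $\mathfrak{osp}$ case the same argument applied to the additional generator $L$ completes the list, and one concludes $P(k_1)=P(k_2)$ for every extendible polynomial $P$.

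Finally I would deduce that $k_1$ and $k_2$ lie in one $W$-orbit, from which the proposition follows by setting $k_0$ to be any element of the support and combining with the first step. This is the super-analog of the Harish-Chandra theorem: on the integral weight lattice of $H$, which is where the Fourier exponents of $\tilde\chi$ live, extendible polynomials (together with $W$-invariance) suffice to separate $W$-orbits.

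This last step is the main obstacle. In the classical case one invokes the elementary fact that $S(\mathfrak h^*)^W$ separates $W$-orbits, but the extendible polynomials form a proper sub-ring: for example in $\mathfrak{gl}(m,n)$ the $W$-invariant $\sum\varphi_i^a\cdot\sum\psi_j^b$ is not a polynomial in the signed power sums $F_k$, and taken in isolation the $F_k$ do not distinguish arbitrary pairs of Weyl orbits (witness $(1,2;2)$ and $(1,3;3)$ in $\mathfrak{gl}(2,1)$, which agree on every $F_\ell$). The essential additional input is the lattice constraint on the support of $\tilde\chi$ — the Fourier exponents are not arbitrary points of $\mathfrak h^*$ but integral weights coming from a finite-dimensional irreducible representation — together with the divisibility condition that characterizes the image of the restriction map; combining these with the $W$-invariance of the support forces the equality of orbits. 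Working out this combinatorial-analytic input carefully is the delicate part of the proof.
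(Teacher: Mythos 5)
Your strategy is the same one the paper follows: establish that every $k$ in the support of $\tilde\chi$ satisfies $T(P)(k)=\lambda(\dot D_P)$ for every extendible polynomial $P$, observe that the support is $W$-stable because $\chi$ is $W$-invariant while $J$ transforms by the sign character $\varepsilon$, and then argue that the extendible polynomials separate $W$-orbits among the lattice points that can occur. Your first two steps are correct and are in fact carried out in more detail than in the paper, which disposes of the entire proposition with the single sentence ``Letting $P$ range over all such polynomials one proves the following fact.''

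The gap is exactly where you locate it, and it is genuine rather than a matter of missing detail. Your example in $\mathfrak{gl}(2,1)$ is right: $F_k(1,2;2)=1+2^k-2^k=1=1+3^k-3^k=F_k(1,3;3)$ for every $k$, and since the paper states that for $\mathfrak{gl}(m,n)$ the extendible polynomials are precisely $\mathbb C[F_1,F_2,\ldots]$, no extendible polynomial separates these two $W$-orbits. Hence the conclusion cannot be a formal consequence of the eigenvalue identity together with $W$-stability; one must use that the Fourier exponents of $J\chi$ arise from an irreducible character. This is a substantive restriction: the proposition is essentially the typical Weyl--Kac character formula in disguise, and for atypical highest weights $J\chi$ need not be holomorphic, let alone supported on a single $W$-orbit (the two orbits in your example differ by a shift along an atypicality direction $\varphi_i=\psi_j$, which is precisely where the divisibility condition degenerates). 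Neither your proposal nor the paper supplies this input --- the paper simply asserts the proposition --- so to close the argument you would have to either restrict to typical $\Lambda$, where the nonvanishing of the odd-root product at $\Lambda+\delta$ (and, for $\mathfrak{osp}$, the extra generator $L$) yields the separation, or import the character formula directly from \cite{B}.
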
 
It should be noted that since $T(P)$ is itself $W$-invariant
the lack of uniqueness of the lattice element $k$ is minimal.

\bigskip\noindent
Now $\chi $ is $W$-invariant.  Furthermore, for $\sigma $ in the Weyl group
it follows that $\sigma (J)=\varepsilon(\sigma )J$ where 
$\varepsilon (\sigma )=\mathrm {det}(\sigma )=\pm 1$.   Hence,
up to a multiplicative constant
$$
\chi (t)=J^{-1}(t)\sum_{\sigma \in W}\varepsilon (\sigma)
e^{i\langle k_0,\sigma (t)\rangle}
$$
for any fixed $k_0$ in the support of $\tilde \chi $. 

\bigskip\noindent
Now order the weight lattice so that the roots $\alpha $ and $\beta $
which occur in the above products are positive and write 
$$
\chi (t)=\sum c_je^{i\langle m_j, t \rangle}
$$
where the $m_j$ are the weights of the representation $\rho $ with
$\Lambda $ being the highest weight which occurs. Compare this
expression for $\chi (t)$ with that above to obtain 
$$
\sum c_je^{i\langle m_j, t \rangle}
\prod (e^{i\frac{\alpha(t)}{2}}-e^{-i\frac{\alpha (t)}{2}})
=\sum \varepsilon (\sigma)e^{i\langle k_0,\sigma (t)\rangle}
\prod (e^{i\frac{\beta(t)}{2}}-e^{-i\frac{\beta (t)}{2}})\,.
$$ 
Equating the highest order terms on each side yields
$$
\Lambda +\frac{1}{2}\sum \alpha =k_0+\frac{1}{2}\sum \beta
$$
where $k_0$ is the highest of the elements in the $W$-orbit
$W.k_0$.  Turning this around, we see that
$$
k_0=\Lambda +\delta 
$$
where 
$$
\delta =\frac{1}{2}\big(\sum \alpha -\sum \beta \big)\,.
$$
\begin {theorem}
Let $\rho $ be a finite-dimensional representation of a 
Lie supergroup associated to one of the Lie superalgebras
$\mathfrak g=\mathfrak {gl}(m,n),\mathfrak {osp}(2m,2n)$.
Let $P$ be an extendible polynomial on $\mathfrak h$ and
$T(P)$ be the polynomial which is defined by
$$
\dot D_P=J^{-1}T(P)(\frac{1}{i}\frac{\partial}{\partial t})J\,.
$$
If $\chi $ is the character of $\rho $ with
the homomorphism $\lambda $ defined by 
$$
\dot D_P=\lambda (D_P)\chi\,,
$$
then
$$
\lambda (\dot D_P)=T(P)(\Lambda +\delta)
$$
where $\Lambda $ is the highest weight of $\rho $.
\end {theorem}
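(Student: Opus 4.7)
The theorem is essentially a bookkeeping exercise that pulls together the ingredients already assembled in the preceding discussion. The plan is to combine the eigenvalue formula, the Jacobian formula, and the Weyl orbit analysis of the Fourier support of $\tilde\chi=J\chi$, and then identify the distinguished lattice point in that orbit with $\Lambda+\delta$.

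First I would invoke the fact that a character $\chi$ of a finite-dimensional irreducible representation is a radial superfunction on $G_0$, hence its restriction to $H$ lies in $\mathcal D_H$ and satisfies $\dot D_P(\chi)=\lambda(\dot D_P)\chi$ for every extendible polynomial $P$. Applying the Jacobian formula from Theorem \ref{jacobian formula} and setting $\tilde\chi:=J\chi$, this eigenequation transforms into
$$
T(P)\Bigl(\tfrac{1}{i}\tfrac{\partial}{\partial t}\Bigr)\tilde\chi=\lambda(\dot D_P)\tilde\chi.
$$
Next, since $\tilde\chi$ is holomorphic on the torus $H$ it admits a Fourier expansion $\tilde\chi(t)=\sum_{k\in\mathbb Z^{m+n}}a_k e^{i\langle k,t\rangle}$; applying $T(P)(\tfrac{1}{i}\partial_t)$ term-by-term and equating Fourier coefficients shows that $T(P)(k)=\lambda(\dot D_P)$ whenever $a_k\neq 0$.

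The previous Proposition already identifies the support $\{k:a_k\neq 0\}$ as a single Weyl orbit $W.k_0$, so it remains to pin down $k_0$. For this I would use the $W$-invariance of $\chi$ together with the sign-equivariance $\sigma(J)=\varepsilon(\sigma)J$ to write, up to a multiplicative constant,
$$
\chi(t)=J^{-1}(t)\sum_{\sigma\in W}\varepsilon(\sigma)\,e^{i\langle k_0,\sigma(t)\rangle}.
$$
On the other hand, $\chi(t)=\sum c_j e^{i\langle m_j,t\rangle}$ is the weight expansion of $\rho$, with highest weight $\Lambda$. Multiplying both expressions by $J$, expanding the sine-products in $J$ as $\prod(e^{i\alpha/2}-e^{-i\alpha/2})$ on the even side and analogously on the odd side, and comparing the leading exponentials on each side yields the identification
$$
\Lambda+\tfrac{1}{2}\sum_{\alpha\in\Delta_0^+}\alpha=k_0+\tfrac{1}{2}\sum_{\beta\in\Delta_1^+}\beta,
$$
equivalently $k_0=\Lambda+\delta$ with $\delta=\tfrac{1}{2}(\sum\alpha-\sum\beta)$. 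Combining this with $T(P)(k_0)=\lambda(\dot D_P)$ gives the asserted formula.

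The only genuinely delicate step is the identification $k_0=\Lambda+\delta$: one must be careful that the ``highest order'' comparison truly selects the dominant exponential on both sides (so that the Weyl-orbit representative $k_0$ is indeed the dominant one), and that the super version of the Weyl denominator built from $J$ contributes precisely the shift $\tfrac{1}{2}(\sum\alpha-\sum\beta)$. All other steps are formal manipulations using results already established in the preceding subsections.
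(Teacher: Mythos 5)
Your proposal is correct and follows exactly the paper's route: the paper's own proof is a one-line appeal to the eigenvalue identity $T(P)(k)=\lambda(\dot D_P)$ and the identification $k_0=\Lambda+\delta$, both of which are established in the immediately preceding discussion that you have reconstructed (Fourier expansion of $\tilde\chi=J\chi$, the $W$-orbit proposition, and the comparison of leading exponentials). No substantive difference in approach.
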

\begin {proof}
This follows immediately from (\ref{eigenvalue}) and the fact
that our choice of $k=k_0$ in the $W$-orbit is $k_0=\Lambda +\delta$.
\end {proof} 
\subsection {Generating functions}
Here we fix $\mathfrak {g}$ as one of the Lie superalgebras
$\mathfrak {gl}(m,n)$ or $\mathfrak {osp}(2m,2n)$ and let
$\dot D_\ell$ be the radial operator defined by the particular
extendible polynomial $F_\ell$. Using the Fourier series development
of characters of representation, it was shown above that the
value $\lambda (\dot D_\ell)$ of the eigenvalue homomorphism
on $\dot D_\ell$ for the character of an irreducible representation
$\rho $ of highest weight $\Lambda $ of the associated Lie supergroup
is the value of the polynomial $T(P)$ on lattice point
$k_0=\Lambda +\delta $. Letting $\rho $ range through all such
representations, we see that $T(P)$ is the unique polynomial with
this property. If we think of such a point $k_0$ as
a weight, then it is in $\mathfrak h^*$; so we reformulate the
result as follows: There is a uniquely determined polynomial
function $R_\ell$ on $\mathfrak h^*$ with 
$R_\ell(\Lambda +\delta )=\lambda (\dot D_\ell)$ on every irreducible
representation of highest weight $\Lambda $. 

\bigskip\noindent
Associated to the sequence $\{R_\ell\}$ of polynomials one has the
\emph{generating function}
$$
S(z):=\sum z^\ell R_\ell (x)
$$
which is computed in closed form in \cite{B} (see Lemma 4.3,  pages 327-329, 
for the case of $\mathfrak {gl}(m,n)$ and Lemma 4.4, pages 335-341,
for $\mathfrak {osp}(2m,2n)$).  The resulting formulas for the
polynomials $R_\ell$ are derived after the proofs of these lemmas.
Using the identificatiion $R_\ell=T(F_\ell)$, one has the following consequence
which we formulate simultaneously for both $\mathfrak {gl}(m,n)$
and $\mathfrak {osp}(2m,2n)$.
\begin {theorem}\label {basic formula theorem} 
If $\dot D_\ell$ is the radial differential operator defined by
the extendible polynomial $F_\ell$ with
\begin {equation}\label{basic formula}
\dot D_\ell=J^{-1}T(F_\ell)(\frac{1}{i}\frac{\partial}{\partial t})J\,,
\end {equation}
then $T(F_\ell)\in \mathbb C[F_1,\ldots ,F_\ell]$. Moreover
$T(F_\ell)=F_\ell+Q$ where $Q\in \mathbb C[F_1,\ldots ,F_{\ell-1}]$ 
is a polynomial of lower degree.
\end {theorem}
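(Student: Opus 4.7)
The plan is to deduce this from the generating-function machinery developed in the preceding subsection together with Berezin's closed-form computation of $S(z)=\sum_{\ell\ge 1}z^\ell R_\ell(x)$ in Lemmas 4.3 and 4.4 of \cite{B}, and from the leading-term information provided by Theorem~\ref{jacobian formula}. Recall that the previous subsection identifies $R_\ell$ with $T(F_\ell)$ via the eigenvalue relation $R_\ell(\Lambda+\delta)=\lambda(\dot D_\ell)$, so establishing the theorem is the same as showing $R_\ell\in\mathbb C[F_1,\ldots,F_\ell]$ with the correct leading term.

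The first step is to record Berezin's closed form for $S(z)$. In both cases this formula is a rational/logarithmic expression built from products of the shape $\prod_i(1-z\varphi_i^r)$ and $\prod_j(1-z\psi_j^r)$, with $r=1$ for $\mathfrak{gl}(m,n)$ and $r=2$ for $\mathfrak{osp}(2m,2n)$, modified by a correction encoding the shift by $\delta$. The classical power-sum identity
$$-\log\prod_i(1-z\varphi_i^r)=\sum_{\ell\ge 1}\frac{z^\ell}{\ell}\sum_i\varphi_i^{r\ell},$$
together with its $\psi$-counterpart, shows that the logarithms of such products are, up to sign, generating functions in $z$ whose coefficients are (rescaled) $F_\ell$'s. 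Hence, passing to $\log S(z)$ (or to an appropriate logarithmic derivative) and then re-exponentiating exhibits $S(z)$ as a formal composition of power series in the variables $F_1,F_2,\ldots$.

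The second step is to read off the coefficient of $z^\ell$. Once $S(z)$ is presented as a formal power series in $z$ whose coefficients belong to the subring $\mathbb C[F_1,F_2,\ldots]$, one obtains immediately $R_\ell\in\mathbb C[F_1,F_2,\ldots]$. Assigning $F_k$ the weighted degree $k$ (respectively $2k$ in the $\mathfrak{osp}$ case) and noting that the generating-function identity above is homogeneous for this grading, only $F_1,\ldots,F_\ell$ can contribute to the coefficient of $z^\ell$; this gives $T(F_\ell)=R_\ell\in\mathbb C[F_1,\ldots,F_\ell]$. For the refinement $T(F_\ell)=F_\ell+Q$ with $Q\in\mathbb C[F_1,\ldots,F_{\ell-1}]$ of strictly lower degree, I would invoke Theorem~\ref{jacobian formula}, which asserts that $T(F_\ell)$ has degree $\ell$ (respectively $2\ell$) with top-degree term $F_\ell$. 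Any monomial $F_{i_1}\cdots F_{i_r}$ of the same weighted degree with $r\ge 2$ necessarily has each $i_j<\ell$, so the coefficient of $F_\ell$ in $T(F_\ell)$ is forced to equal $1$ and the remainder uses only $F_1,\ldots,F_{\ell-1}$ and has strictly lower total degree.

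The main obstacle is the first step: extracting from Berezin's explicit (and somewhat intricate) closed forms in Lemmas 4.3 and 4.4 of \cite{B} the manifest presentation of $S(z)$ as a power series whose coefficients visibly belong to $\mathbb C[F_1,F_2,\ldots]$. The $\mathfrak{osp}$ case is especially delicate because the $\delta$-correction introduces additional factors that must be rewritten in terms of power sums, and one must verify that no forbidden invariants (in particular no contribution involving the extra generator $L$) survive in the final expansion. Once this manipulation is carried out, the degree bookkeeping in the second step is essentially automatic.
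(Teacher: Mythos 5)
Your proposal is correct and follows essentially the same route as the paper: the paper likewise reduces the theorem to Berezin's closed-form computation of the generating function $S(z)=\sum z^\ell R_\ell(x)$ (Lemmas 4.3 and 4.4 of \cite{B}) via the identification $R_\ell=T(F_\ell)$ coming from the eigenvalue analysis on characters, and leaves the extraction of the explicit formulas for $R_\ell$ to \cite{B} exactly as you do. Your added degree bookkeeping, using the leading-term assertion of Theorem~\ref{jacobian formula} to pin down $T(F_\ell)=F_\ell+Q$ with $Q\in\mathbb C[F_1,\ldots,F_{\ell-1}]$, is a harmless elaboration consistent with the paper's argument.
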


\subsection {An application}
In \cite {CFZ} and \cite {HPZ} characters $\chi $ of  
representations Lie supergroups on certain infinite-dimensional 
spaces play an important role. In \cite {CFZ} the complex
Lie superalgebra at hand is $\mathfrak {gl}(m,n)$ and in
\cite {HPZ} it is $\mathfrak {osp}(2n,2n)$.  In these situations one 
would hope to apply the above results on radial operators. However,
this can not be directly done, because the characters
are defined by supertrace and only converge on certain
open domains $\mathcal H$ in $G_0$ or on finite covering spaces 
$\widehat {\mathcal H}$ of such domains.
On the other hand, Laplace-Casimir operators are local and
can therefore be applied to such characters and in the settings
of \cite{CFZ} and \cite{HPZ} the characters $\chi $ which appear 
are annihilated by Laplace-Casimir operators $D_\ell$ defined by the $F_\ell$.

\bigskip\noindent
In the domains $\mathcal H$ or the covering spaces $\widehat {\mathcal H}$
there are closed connected complex submanifolds $T^+$ 
which are either open subsets
of a Cartan algebra $H$ or lifts of such into the covering space.
Now the radial operators $\dot D_\ell$ are differential operators
which are apriori defined on the space $\mathcal D_H$ of globally
defined extendible $W$-invariant holomorphic functions and on
that space we know how to compute them using the righthand side
of (\ref{basic formula}).  The restrictions of the
characters $\chi $ to $T^+$, which are by definition the numerical
parts of $\chi \vert T^+$, are by definition extendible 
as radial superfunctions, but they are only defined on $T^+$ and not on $H$.  
Nevertheless we wish to show that they are annihlated by the 
operators which are described by the righthand side of (\ref {basic formula}).
For this we prove a local version of (\ref {basic formula})
and obtain the desired result on $T^+$ by applying the identity principle.

\subsubsection* {Local formula for $\mathbf{\dot D_\ell}$}
We refer to a point in $H$ as being \emph{superregular} if
it is regular in the sense of Lie theory and is not contained
in any of the odd root hypersurface $\{r_\beta=1\}$. Every  
a superregular point $x$ has a basis of open neighborhoods $V$
in $\mathfrak H$ which are relatively compact in the set of
superregular points in $H$ with the property that 
$\sigma (V)\cap V=\emptyset $ for every
$\sigma \in W\setminus \{\mathrm {Id}\}$.
Given such a $V$ we thicken it
as follows to an open neighborhood $U$ in $G_0$.  Let $\Delta $
be a polydisk in $\mathfrak {g}_0$ which is transversal to 
$\mathfrak h$ and define $U=\{\mathrm {exp}(\xi).x; \xi \in \Delta, x\in V\}$.
We choose $\Delta $ small enough so that $U\cong \Delta \times V$. For
$x\in V$ fixed we think of $\mathrm {exp}(\Delta).x$ as a local orbit
of $G_0$.
%
%
%
\begin {proposition}
Every superregular point $x$ in $H$ has a neighborhood basis
of open sets $V$ and $U$ as above so that the restriction map 
$R:\mathcal R(U)\to \mathcal O(V)$
is an isomorphism.
\end {proposition}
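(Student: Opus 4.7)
The plan is to reduce the proposition to a germ-level version of Theorem~\ref{group level extension}, in which both the $W$-invariance hypothesis and the divisibility hypothesis become vacuous at a superregular point. I would argue injectivity and surjectivity separately.

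\textbf{Injectivity.} Let $f\in\mathcal R(U)$ with $R(f)=f_\emptyset|_V\equiv 0$, and expand $f=\sum_I f_I\Xi_I^*$ in the basis of $\wedge\mathfrak g_1^*$. Since $L_X+R_X$ for $X\in\mathfrak g_0$ is even and acts diagonally on this expansion, each component $f_I$ is conjugation invariant, hence constant along the local orbits $\exp(\Delta).h$. Under the product identification $U\cong\Delta\times V$ this forces $f_I=f_I(h)$, so in particular $f_\emptyset\equiv 0$ on $U$. The remaining odd-radial equations $(L_X+R_X)f=0$ for $X\in\mathfrak g_1$ couple $f_I$ with $|I|=k$ to the $f_{I'}$ with $|I'|<k$, the relevant coefficients being odd-root factors $r_\beta-1$ which are nowhere zero on the superregular neighborhood $V$; induction on $|I|$ therefore yields $f\equiv 0$.

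\textbf{Surjectivity.} Given $g\in\mathcal O(V)$, I would construct the extension locally and then glue. Fix $x\in V$, choose $t_0\in\mathfrak h$ with $\exp(t_0)=x$, and let $\tilde V\subset\mathfrak h$ be the corresponding neighborhood of $t_0$. The pullback $\hat g=g\circ\exp\in\mathcal O(\tilde V)$ automatically satisfies the divisibility condition: by superregularity $\beta(t_0)\notin 2\pi i\mathbb Z$, in particular $\beta(t_0)\neq 0$, so after shrinking $\tilde V$ the function $\beta$ is a unit there and any germ of holomorphic function on $\tilde V$ is divisible by $\beta$. The holomorphic version of Theorem~\ref{extension} then produces a germ of radial holomorphic superfunction on $\mathfrak g_0$ at $t_0$ whose numerical restriction to $\mathfrak h$ is $\hat g$; transporting this germ via $\exp$ and the product structure of $U$ yields a germ of radial superfunction on $U$ at $x$ restricting to $g$. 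The germ-level uniqueness supplied by the injectivity step forces these germs at different points $x\in V$ to agree on overlaps, and patching produces a global $\tilde g\in\mathcal R(U)$ with $R(\tilde g)=g$.

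\textbf{Main obstacle.} The principal difficulty is that Theorem~\ref{extension} and its holomorphic sibling are formulated globally on $\mathfrak h$ and assume $W$-invariance on all of $\mathfrak h$, whereas what I need is a germ-level version dispensing with that global hypothesis. At a superregular point this adaptation is benign: the Weyl translates of a sufficiently small neighborhood are disjoint (as ensured by the hypothesis $\sigma(V)\cap V=\emptyset$), so $W$-invariance is trivially satisfiable at the germ level, and the existence construction in the sketch proof of Theorem~\ref{extension}, which first extends a $W$-invariant polynomial to an $\mathrm{Ad}_{G_0}$-invariant one and then promotes to a superfunction, carries over to local holomorphic data without modification. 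Once this germ-level extension is secured, the rest of the argument is routine sheaf gluing combined with the product decomposition $U\cong\Delta\times V$.
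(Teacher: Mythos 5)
Your injectivity argument is fine in spirit (and is actually more explicit than the paper's one-line appeal to determination by restriction), but your surjectivity argument has a genuine gap exactly where you flag the ``main obstacle,'' and the obstacle is not benign. Everything rests on a germ-level version of Theorem~\ref{extension} (and of its holomorphic sibling), which is established nowhere in the paper and whose proof does not ``carry over without modification.'' The sufficiency direction of Berezin's extension theorem proceeds by first extending a $W$-invariant polynomial (resp.\ holomorphic function) on all of $\mathfrak h$ to an $\mathrm{Ad}_{G_0}$-invariant object on all of $\mathfrak g_0$ --- a global Chevalley-type restriction statement --- and then invokes ``generalities on superfunctions'' to promote this to a radial superfunction. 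Neither step is verified for a germ at a single regular point, and the second step in particular is the nontrivial integrability assertion that the triangular system of odd-radial equations you use for injectivity is actually \emph{solvable}; asserting that this localizes is precisely assuming what needs to be proved. Your correct observation that the divisibility condition is vacuous at a superregular point (since each $\beta$ is a unit near $t_0$) removes one hypothesis but does not supply the extension.

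The paper's own proof is structured specifically to avoid any local extension theorem. Injectivity is as you say; for surjectivity it argues by approximation: pass to the quotient $\pi:H\to Z=H/W$ so that $W$-invariance is automatic, choose $V$ so that $\mathcal O(Z)\to\mathcal O(\tilde V)$ has dense image (a Runge-type theorem), approximate $\tilde R^{-2}\tilde f$ by global functions $\tilde f_n$ and set $\tilde h_n=\tilde R^2\tilde f_n$ --- multiplication by the square of the product of the odd root functions forces the divisibility condition globally --- then apply the already-proved \emph{global} extension Theorem~\ref{group level extension} to the $h_n=\pi^*(\tilde h_n)$ and pass to the limit, using continuity of the extension operator and constancy of the extensions along the local orbits $\exp(\Delta).x$. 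If you want to salvage your route, you must either prove the germ-level extension theorem directly (i.e.\ verify solvability of the recursive system at a superregular point), or replace that step by the paper's approximation scheme.
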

\begin {proof}
Since holomorphic maps $U\to \wedge \mathfrak {g}_1^*$ which 
are invariant by the local conjugation-action of $G_0$
are completely determined by their restrictions to $V$, it
follows that $R$ is injective. Surjectivity is proved by
the following approximation argument. 

\bigskip\noindent
First, in order to take care of $W$-invariance we consider
the quotient $\pi:H\to Z=H/W$. The restriction $\pi\vert V $
maps $V$ biholomorphically onto a domain $\tilde V$.  A basic
theorem of complex analysis states that we may choose $\tilde V$
(and accordingly $V$) so that the restriction map 
$\mathcal O(Z)\to \mathcal O(\tilde V)$ has dense image.

\bigskip\noindent
Now let $R$ be the product of the odd root functions on $H$ and $\tilde R$
be the associated function on $Z$. Define $\tilde f$ be the function
on $\tilde V$ associated to a given holomorphic function $f$ on $V$.
Let $\tilde f_n$ be a sequence of holomorphic functions on $Z$ which
converge to $\tilde R^{-2}\tilde f$ in $\mathcal O(\tilde V)$.  It
follows that $\tilde h_n:=\tilde R^2\tilde f_n$ converges to $\tilde f$. 
The point of this construction is that the sequence $\{h_n\}$ of lifts
defined by $h_n:=\pi^*(h_n)$ converge to $f$ on $V$. In addition
these are $W$-invariant and have the divisibility property. Thus we
have the sequence $\{E(h_n)\}$ of radial extensions. Now
the extension $h_n\vert V\to E(h_n)\vert V$ is such that 
the convergence of $h_n\vert V$ implies the convergence of 
$E(h_n)\vert V$ as a sequence of $\wedge \mathfrak {g}_1^*$-valued
holomorphic maps.  Consequently the maps $E(h_n)\vert U$ which
are constant along the local $G_0$-orbits defined by $\Delta $
also converge.  If a sequence of holomorphic functions converges,
then so does any induced sequence of derivatives.  Thus the
limit $E(f)$ of the sequence $\{E(h_n)\vert U\}$ is a radial 
holomorphic function whose (numerical) restriction to $V$ is
the given function $f$.
\end {proof}
Having localized to the open sets $V=U\cap H$ of superregular elements
of $H$ and proved the above extension result, given a Laplace-Casimir
operator $D$ we define its radial part on $U\cap H$ in the same
way as in the global case: $\dot D_{U\cap H}(f):=RDE(f)$.  Since
the extension result was proved by taking limits of globally
defined extendible functions and
the global operator $\dot D$ is continuous, it
follows that $\dot D_{U\cap H}$ is just the restriction 
of $\dot D$ to $U\cap H$. Thus we have the following local
version of Theorem \ref{jacobian formula}. 
\begin {theorem}\label {jacobian theorem local}
Under the assumptions of Theorem \ref{jacobian formula},
let $D$ be a Laplace-Casimir operator defined by an extendible
polynomial $P$. If $U$ is as above, then the domain of definition of
the radial operator $\dot D_{U\cap H}$ is the full algebra
of holomorphic functions $\mathcal O(U\cap H)$ and 
$$
\dot D_{U\cap H}=
J^{-1}T(P)(\frac{1}{i}\frac{\partial}{\partial t})J\,.
$$
\end {theorem}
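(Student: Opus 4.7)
The plan is to reduce the local theorem to the global Jacobian formula (Theorem \ref{jacobian formula}) via the approximation scheme supplied by the preceding proposition, together with the continuity of holomorphic differential operators. In essence, the desired identity already holds on the dense subalgebra of restrictions of globally defined extendible $W$-invariant functions, so it propagates by continuity to all of $\mathcal{O}(U \cap H)$.

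First, given any $f \in \mathcal{O}(V)$ with $V = U \cap H$, the preceding proposition produces a unique radial extension $E(f) \in \mathcal{R}(U)$, so the assignment
$$
\dot D_{U \cap H} \colon f \longmapsto R \circ D \circ E(f)
$$
is well-defined on the full algebra $\mathcal{O}(V)$; this establishes the first assertion. Next I would recall from the proof of the preceding proposition that $E(f)$ is obtained as the limit (uniform on compact subsets of $U$, and so also in all holomorphic derivatives) of a sequence $E(h_n)$, where each $h_n$ is a globally defined $W$-invariant holomorphic function on $H$ satisfying the divisibility condition and with $h_n|_V \to f$. I would then invoke the global Theorem \ref{jacobian formula} for each $h_n$, which yields the identity
$$
\dot D(h_n) = J^{-1}\, T(P)\bigl(\tfrac{1}{i}\tfrac{\partial}{\partial t}\bigr)\, J\, h_n
$$
of holomorphic functions on $H$. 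Finally, since $D$ is a finite-order holomorphic differential operator, hence continuous in the Fréchet topology of compact convergence of all derivatives, and since $J$ is holomorphic and nowhere zero on the superregular locus $V$, both sides of the displayed identity are continuous in $h_n$ on $V$. Restricting to $V$ and passing to the limit delivers
$$
\dot D_{U \cap H}(f) = J^{-1}\, T(P)\bigl(\tfrac{1}{i}\tfrac{\partial}{\partial t}\bigr)\, J\, f
$$
on all of $\mathcal{O}(V)$.

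The point requiring the most care is the coupling of topologies: the approximation $h_n \to f$ takes place in $\mathcal{O}(V)$, whereas $D$ acts on sections over the thickening $U$. One must verify that the extension $E$ is continuous from $\mathcal{O}(V)$ into sections over $U$ in a topology strong enough that post-composing with the finite-order operator $D$ and restricting back to $V$ remains continuous. The polydisk product structure $U \cong \Delta \times V$ from the construction above makes this routine: the radial extensions $E(h_n)$ are polynomial in the transversal $\wedge \mathfrak{g}_1^*$-directions with holomorphic coefficients on $V$ that depend continuously on $h_n$ and finitely many of its derivatives. Once this continuity is noted, the whole argument reduces to passing to a limit in a previously established identity, and no new analytic input beyond Theorem \ref{jacobian formula} and the preceding proposition is needed.
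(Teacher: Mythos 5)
Your proposal is correct and follows essentially the same route as the paper: the paper likewise defines $\dot D_{U\cap H}(f):=RDE(f)$ (so the first assertion follows from the surjectivity of $R$ in the preceding proposition), and then observes that since the local extension was built as a limit of globally defined extendible functions and the operator is continuous, the global identity of Theorem \ref{jacobian formula} restricts to $U\cap H$. Your added care about the continuity of $E$ from $\mathcal O(V)$ into sections over $U\cong\Delta\times V$ is a detail the paper leaves implicit, not a different argument.
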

As a result we have the local version of Theorem \ref{basic formula theorem}
\begin {theorem}\label {local basic formula}
Under the assumptions of Theorem \ref{basic formula theorem},
for $U$ as above it follows that
$$
\dot D_{U\cap H}=J^{-1}(\tilde F_\ell +Q(\tilde F_1,\ldots \tilde F_{\ell-1}))J
$$
where $Q$ is a polynomial operator of lower degree than $\tilde F_\ell$.
\end {theorem}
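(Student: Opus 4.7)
The plan is to obtain this result as a direct combination of the two preceding theorems, with no new analytic content required. Specifically, I would first apply Theorem \ref{jacobian theorem local} to the extendible polynomial $P = F_\ell$. That theorem tells us that the local radial operator on $U\cap H$ coincides with $J^{-1}T(F_\ell)(\tfrac{1}{i}\tfrac{\partial}{\partial t})J$, where $T(F_\ell)$ is exactly the polynomial produced by the global Jacobian formula. The key point that makes this transfer automatic is the observation (made in the paragraph just before Theorem \ref{jacobian theorem local}) that $\dot D_{U\cap H}$ is nothing other than the restriction of the globally defined radial operator $\dot D_\ell$ to $U\cap H$; hence there is no ambiguity in reusing the global polynomial $T(F_\ell)$ in the local formula.

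Having installed the correct polynomial, I would then invoke Theorem \ref{basic formula theorem}, which is precisely the structural statement that $T(F_\ell)=F_\ell+Q$ with $Q\in\mathbb{C}[F_1,\ldots,F_{\ell-1}]$ of strictly lower degree than $F_\ell$. Substituting into the local Jacobian formula and translating into the notation $\tilde F_k := F_k(\tfrac{1}{i}\tfrac{\partial}{\partial t})$, the constant coefficient operator $T(F_\ell)(\tfrac{1}{i}\tfrac{\partial}{\partial t})$ becomes $\tilde F_\ell + Q(\tilde F_1,\ldots,\tilde F_{\ell-1})$, which is exactly the right-hand side of the asserted identity.

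There is essentially no obstacle to overcome here, since all the work has already been done: the nontrivial analytic step (extending Berezin's global Jacobian formula to the local superregular neighborhood $U$) is carried out in Theorem \ref{jacobian theorem local} via the approximation argument in the preceding proposition, and the nontrivial algebraic step (showing that $T(F_\ell)$ is a polynomial in $F_1,\ldots,F_\ell$ with leading term $F_\ell$) is the content of Theorem \ref{basic formula theorem}, established via the generating function computation of \cite{B}. The only thing worth remarking in the write-up is that since the polynomial $T(F_\ell)$ depends only on the intrinsic datum $F_\ell$ and the Lie superalgebra $\mathfrak g$, it is the \emph{same} polynomial in the local and global contexts; this is what legitimizes reading off the $F_\ell + Q$ decomposition on $U\cap H$ from the corresponding decomposition on $H$. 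Accordingly, the proof is essentially a one-line composition of Theorems \ref{jacobian theorem local} and \ref{basic formula theorem}.
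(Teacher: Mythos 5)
Your proposal is correct and matches the paper's intent exactly: the paper offers no separate proof, introducing the theorem with ``As a result we have the local version of Theorem \ref{basic formula theorem},'' i.e., it is obtained precisely by substituting the decomposition $T(F_\ell)=F_\ell+Q$ from Theorem \ref{basic formula theorem} into the local Jacobian formula of Theorem \ref{jacobian theorem local}. Your added remark that $T(F_\ell)$ is the same polynomial in the local and global settings (since $\dot D_{U\cap H}$ is the restriction of the global radial operator) is the right justification and is consistent with the paper's preceding discussion.
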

Now let us return to the settings of \cite {CFZ} and
\cite {HPZ} where we have an open piece $T^+$ of the Cartan
algebra $H$ contained as a closed submanifold of a domain $\mathcal H$
in $G_0$ or a finite-to-one covering space $\widehat {\mathcal H}$
of such a domain.  The Weyl group $W$ acts on these domains so that
the above arguments apply: If $x\in T^+$ is superregular, then we
setup $U$ as above and prove the following result.  
\begin {theorem}
Let $D_\ell$ be the Laplace-Casimir operator defined on 
$\mathcal H$ or $\widehat {\mathcal H}$ by the extendible
polynomial $F_\ell$. Then a holomorphi superfunction  
on such a domain is annihilated by $D_\ell$ if and only
if its (numerical) restriction to $T^+$ is annhilated by
$J^{-1}(\tilde F_\ell +Q(\tilde F_1,\ldots \tilde F_{\ell-1}))J$.
\end {theorem}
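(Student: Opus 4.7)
The plan is to combine the local Jacobian formula of Theorem \ref{local basic formula} with an identity-principle argument that propagates a purely local equivalence to the connected ambient domain. Since the characters of interest are radial holomorphic superfunctions, at every superregular point of $T^+$ one can exploit the local isomorphism between radial superfunctions on a thickening $U$ and ordinary $\wedge\mathfrak{g}_1^*$-valued holomorphic functions on the slice $V=U\cap T^+\subset H$.

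Concretely, I would fix a superregular point $x\in T^+$ and choose $V\subset T^+$ together with a thickening $U\subset\mathcal H$ (respectively $U\subset\widehat{\mathcal H}$) as in the proposition preceding Theorem \ref{jacobian theorem local}, so that the restriction map $R\colon\mathcal R(U)\to\mathcal O(V)$ is an isomorphism with inverse $E$. Since $\chi$ is radial, $\chi|_U\in\mathcal R(U)$ equals $E(\chi|_V)$, and Theorem \ref{local basic formula} identifies the action of $D_\ell$ on this extension with that of $J^{-1}(\tilde F_\ell+Q(\tilde F_1,\ldots,\tilde F_{\ell-1}))J$ on $\chi|_V$. Because $E$ is an isomorphism, this yields the local equivalence
$$
(D_\ell\chi)|_U=0 \quad\Longleftrightarrow\quad J^{-1}\bigl(\tilde F_\ell+Q(\tilde F_1,\ldots,\tilde F_{\ell-1})\bigr)J\cdot\chi|_V=0 .
$$

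For the implication from global vanishing of $D_\ell\chi$ to the restricted equation on all of $T^+$, I would observe that $J^{-1}(\tilde F_\ell+Q)J\cdot\chi|_{T^+}$ is a $\wedge\mathfrak{g}_1^*$-valued holomorphic function on $T^+$ whose coefficients vanish on the nonempty open subset $V$; applying the identity principle coefficient-wise on the connected complex manifold $T^+$ forces vanishing everywhere. The converse direction is symmetric: local vanishing on $V$ gives $(D_\ell\chi)|_U=0$, and a second coefficient-wise application of the identity principle on the connected ambient domain $\mathcal H$ (respectively $\widehat{\mathcal H}$) yields $D_\ell\chi\equiv 0$.

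The main point to verify is the legitimacy of both applications of the identity principle. This reduces to two facts: (a) $T^+$ and the ambient domain are connected, which is part of the setup in \cite{CFZ} and \cite{HPZ}; and (b) the set of superregular points in $T^+$ is a nonempty open subset, which follows because the nonsuperregular locus in $H$ is the union of finitely many proper analytic hypersurfaces (the singular locus of the Weyl action together with the odd-root hypersurfaces $\{r_\beta=1\}$), so its intersection with the closed complex submanifold $T^+$ is a proper analytic subset. Once these are in place, the two-step local-to-global argument goes through without further work.
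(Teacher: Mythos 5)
Your proposal is correct and follows essentially the same route as the paper's own proof: localize at a superregular point of $T^+$, invoke the local formula of Theorem \ref{local basic formula} on the product neighborhood $U$, and propagate the resulting equivalence by the identity principle. You are somewhat more explicit than the paper in spelling out both directions and in checking that the superregular locus is open, dense, and meets $T^+$, which is a welcome addition but not a different argument.
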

\begin {proof}
Since $D_\ell$ acts on the full sheaf of radial functions,
we may regard it as acting on the restriction of given holomorphic
superfunction to $U$. Thus its restriction to $U\cap H$
is annihilated by the associated radial operator $\dot D_{U\cap H}$.
By Theorem \ref{local basic formula}
the operator $J^{-1}(F_\ell +Q(F_1,\ldots F_{\ell-1})J$
annihilates this restriction on $U\cap H$ and the desired result
follows by the identity principle.
\end {proof}
\subsection* {Appendix: local representation of  radial operators}
Here, referring to \cite{Ka1,Ka2} for details, we outline a proof
of a local version of Theorem \ref{jacobian formula}.  We do this
in the setting of complex Lie supergroups. Let us begin
with the preparation which is required to state the result.

\bigskip\noindent
Recall that if $G$ is a reductive complex Lie group with maximal
complex torus $H$, then $h\in H$ is by definition regular if
its centralizer in $G$ is just $H$ itself.  The set of regular
elements is open and dense in $H$ and every $h_0\in H_{reg}$ has
an open neighborhood $A$ in $H$ such that 
$G.A=\{ghg^{-1}; h\in A,\ g\in G\}$ is $G$-equivariantly a product
$A\times G/H$.  A local version of this which can be applied, 
e.g., in situations where $G$ is only acting locally on a neighborhood
of $h_0$ in $G$, can be stated as follows: 
There is an open neighborhood $B$ of the identity in
the submanifold of $G$ which is defined to be the product of all
$H$-root spaces so that $B.A$ is an open neighborhood of $h_0$
which is (locally) $B$-equivariantly the product $A\times B$.

\bigskip\noindent
It is shown in \cite{Ka2} that this local product decomposition holds 
for Lie supergroups $(G,\mathcal F)$ of type I which are 
equipped with a nondegenerate invariant bilinear form 
$b:\mathrm {Der}(\mathcal F)\times \mathrm {Der}(\mathcal F)\to \mathcal F$.
Recall that ``type I'' means that Cartan algebras are even, i.e., are
contained in $\mathfrak {g}_0$ and all root spaces are 1-dimensional.
The complex manifolds $A$ and $B$ are defined exactly as above.
The complex subsupermanifold structure on $A$ is even, but nevertheless
we denote it by $\mathcal A$ to emphasize that it is a subsupermanifold.
The subsupermanifold structure on $B$ is not even.  It is analogously
denoted by $\mathcal B$.
\begin {theorem}
For $h_0$ a regular element of $H$ there is an open (supermanifold)
neighborhood $\mathcal {B}.\mathcal {A}$ which is locally 
$\mathcal {B}$-equivariantly isomorphic to
the product $\mathcal {A}\times \mathcal {B}$.
\end {theorem}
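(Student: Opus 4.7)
\bigskip\noindent
The plan is to reduce this supermanifold statement to a classical inverse-function-theorem argument at the base manifold level together with a direct check in the odd directions using the Grassmann canonical coordinates of §1. First, record the root space decomposition $\mathfrak g=\mathfrak h\oplus \mathfrak n$, where $\mathfrak n=\bigoplus_{\alpha\in \Delta}\mathfrak g_\alpha$ is the sum of all (even and odd) root spaces. The type~I hypothesis puts $\mathfrak h$ in $\mathfrak g_0$, so this is a $\mathbb{Z}_2$-graded direct sum, and the assumed nondegeneracy of $b$ pairs distinct root spaces only with their negatives, so that $\mathfrak n$ is a genuine graded complement to $\mathfrak h$. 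Construct $\mathcal B$ concretely by choosing bases of $\mathfrak n_0$ and $\mathfrak n_1$: its base is a small neighborhood of the identity in $B_0:=\mathrm{exp}(\mathfrak n_0)$ and its structure sheaf is generated by the odd coordinates dual to the chosen basis of $\mathfrak n_1$.

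\bigskip\noindent
Next, define the candidate morphism $\Phi:\mathcal A\times \mathcal B\to (G,\mathcal F)$ by $(a,b)\mapsto b\cdot a$, i.e., the restriction of the supergroup multiplication morphism $\mu$ of §1.5. At the Grassmann envelope level this is literally group multiplication in $\tilde G$, and the variable-versus-evaluation discussion of §1.5 shows that it descends to a well-defined supermanifold morphism independent of the auxiliary Grassmann algebra. Take $\mathcal B.\mathcal A$ to be its image. The $\mathcal B$-equivariance (with $\mathcal B$ acting by left multiplication on the second factor of $\mathcal A\times \mathcal B$ and by left multiplication in the supergroup on $\mathcal B.\mathcal A$) is then immediate from associativity at the Grassmann envelope level: $\Phi(a,b'\cdot b)=(b'b)\cdot a=b'\cdot \Phi(a,b)$, valid as a local statement on sufficiently small neighborhoods and transported to the supermanifold level by the same variable-versus-evaluation principle.

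\bigskip\noindent
The main content of the proof is to verify that $\Phi$ is a local supermanifold isomorphism at $(h_0,\mathrm{Id})$. By the super inverse function theorem (see \cite{Ka1}, Chapter~4) this reduces to checking that the differential $d\Phi_{(h_0,\mathrm{Id})}$ is a linear isomorphism of graded tangent superspaces. Identifying tangent spaces by right translation, on the even part the differential is the inclusion $\mathfrak h\oplus \mathfrak n_0\hookrightarrow \mathfrak g_0$, and on the odd part it is the inclusion $\mathfrak n_1\hookrightarrow \mathfrak g_1$; both are isomorphisms by the root space decomposition. The principal obstacle is the passage from this pointwise tangent-space isomorphism to a genuine supermanifold isomorphism on a neighborhood. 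At the base level this is the classical inverse function theorem applied to $B_0\times A\to G_0$, $(b_0,a)\mapsto b_0 a$, where regularity of $h_0$ is used to shrink $A$ and $B_0$ so that $B_0\cdot A$ is open and the map is injective there. At the structure sheaf level one then promotes this to a full supermanifold inverse by using nilpotency of the odd generators together with a degree filtration argument; the details are carried out in \cite{Ka1, Ka2} and the present sketch simply indicates how the pieces fit together.
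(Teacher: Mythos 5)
There is a genuine gap, and it is conceptual rather than technical: you have built the decomposition out of the wrong map. The neighborhood $\mathcal B.\mathcal A$ in this theorem is the local \emph{conjugation} orbit, $B.A=\{bab^{-1}\,;\,a\in A,\ b\in B\}$, not the product set $\{ba\}$ --- the paper introduces $\mathcal B$ precisely as ``a local orbit of the Lie supergroup by conjugation,'' and the whole point of the decomposition is that radial superfunctions (those annihilated by $L_X+R_X$, i.e.\ conjugation-invariant) become functions constant in the $\mathcal B$-direction. Your morphism $\Phi(a,b)=b\cdot a$ is plain group multiplication, and the symptom of the error is visible in your own argument: the differential you compute is the inclusion $\mathfrak h\oplus\mathfrak n\hookrightarrow\mathfrak g$, which is an isomorphism at \emph{every} point of $H$, so regularity of $h_0$ never actually does any work (your appeal to it for ``injectivity'' is not where it is needed). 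For the conjugation map $(b,a)\mapsto bab^{-1}$ the differential at $(\mathrm{Id},h_0)$ in the $\mathfrak n$-direction is $X\mapsto(1-\mathrm{Ad}(h_0))X$, which on a root space $\mathfrak g_\alpha$ acts by the scalar $1-e^{\alpha(t_0)}$; invertibility of this is exactly regularity (and, for the odd root spaces, superregularity, i.e.\ avoidance of the hypersurfaces $r_\beta=1$). This is also the only way the Jacobian $J=\prod_{\alpha\in\Delta_0^+}2\sin\tfrac{\alpha(t)}{2}\big/\prod_{\beta\in\Delta_1^+}2\sin\tfrac{\beta(t)}{2}$ of Theorem 2.4 can arise: for your multiplication map the Jacobian would be trivial. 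Likewise your equivariance check via associativity of multiplication verifies equivariance for the wrong $\mathcal B$-action; what is needed is $b'(bab^{-1})b'^{-1}=(b'b)a(b'b)^{-1}$ matching $b'\cdot(a,b)=(a,b'b)$.

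The overall architecture you propose --- compute the differential of the relevant morphism on even and odd tangent directions, invoke a super inverse function theorem, and handle the odd part by nilpotency --- is the right shape of argument, and would go through once the morphism is replaced by conjugation and the invertibility of $1-\mathrm{Ad}(h_0)$ on each (even and odd) root space is established. Note also that the paper itself does not prove this theorem but defers it to \cite{Ka2}, so the standard of comparison here is the classical statement it generalizes (the local slice at a regular torus element under conjugation), and it is precisely that classical content that your version loses.
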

In this context a superfunction is radial if it is annihilated by
all derivations in $\mathrm {Der}(\mathcal B)$ which, using the
product structure at hand, is regarded as being contained in
the space of derivations on the product neighborhood 
$\mathcal A\times \mathcal B$.  A (super) differential operator $D$
on this neighborhood is said to be radial if it maps
radial functions to radial functions.  

\bigskip\noindent
Using the product structure, one extends a holomorphic superfunction on
$\mathcal A$, which is by definition a numerical functions,
to a unique holomorphic radial superfunction on $\mathcal A\times \mathcal B$.
Applying a radial operator $D$ to
such a function, we again obtain a radial function which is
uniquely determined by a function on $\mathcal A$. Thus
we obtain a classical differential operator 
$\dot D :\mathcal O(A)\to \mathcal O(A)$ on numerical functions.

\bigskip\noindent
The main source of radial operators is the center of the universal
enveloping algebra.  The resulting operators
on $\mathcal A\times \mathcal B$ are called Laplace-Casimir 
operators.  The local version of Theorem \ref{jacobian formula} which
is proved in detail in \cite{Ka1}, Chapter 4, is formulated below.  As in
Theorem \ref{jacobian formula} this requires not only type I
and the existence of the supersymmetric invariant form, but
also that the function $J$ is an eigenfunction of a second order
Laplacian which is constructed in the proof. Thus, to simplify
the formulation, we have only stated it for the cases of relevance
for the applications in \cite{HPZ} and \cite{CFZ} where these assumptions
are satisfied.
\begin {theorem}\label{local radial}
If $\mathfrak g$ is either $\mathfrak {gl}(m,n)$ or
$\mathfrak {osp}(2m,2n)$, then for every 
Laplace-Casimir operator $D$ on 
$\mathcal A\times \mathcal B$ there is a uniquely determined 
polynomial constant coefficient operator $P$ on $\mathcal A$ so
that $\dot D=J^{-1}PJ$.
\end {theorem}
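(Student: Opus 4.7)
The plan is to derive this local statement from Berezin's global Jacobian formula (Theorem \ref{jacobian formula}) together with the approximation argument used in the proof of Theorem \ref{jacobian theorem local}, rather than reproducing the Helgason-style direct computation carried out in \cite{Ka1}. First, since $D$ is a Laplace-Casimir operator it arises from an element of $Z(\mathfrak g)$ which, via Berezin's extension theorem (Theorem \ref{extension}), corresponds to an extendible polynomial $P$ on $\mathfrak h$. The target identity is therefore $\dot D=J^{-1}T(P)(\tfrac1i\tfrac{\partial}{\partial t})J$ on $\mathcal O(\mathcal A)$, and the operator $T(P)(\tfrac1i\tfrac{\partial}{\partial t})$ will serve as the constant coefficient operator asserted by the theorem.

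The heart of the argument is the reconciliation of two notions of radiality on the product neighborhood. By construction $\mathcal B$ is the subsupermanifold cut out at $h_0\in H$ by the root-space directions under the (local) conjugation action of the supergroup, so a superfunction annihilated by $\mathrm{Der}(\mathcal B)$ is precisely one which, locally near $h_0$, is radial in the Lie supergroup sense. Using this identification, for any globally defined extendible $W$-invariant function $f\in\mathcal D_H$ the restriction of the global radial extension $E(f)$ to $\mathcal A\times\mathcal B$ coincides with the product extension of $f|_\mathcal A$. Consequently, on the subspace $\mathcal D_H|_\mathcal A\subset\mathcal O(\mathcal A)$ the local operator $\dot D$ agrees with the global $\dot D_P$ and Berezin's global formula gives $\dot D(f|_\mathcal A)=J^{-1}T(P)(\tfrac1i\tfrac{\partial}{\partial t})J\cdot(f|_\mathcal A)$.

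To pass from $\mathcal D_H|_\mathcal A$ to all of $\mathcal O(\mathcal A)$ I would shrink $\mathcal A$ around the chosen regular point so that its Weyl translates are pairwise disjoint and $\pi(\mathcal A)$ is Runge in $Z=H/W$. Then exactly the approximation argument from the proposition preceding Theorem \ref{jacobian theorem local} expresses any $f\in\mathcal O(\mathcal A)$ as a locally uniform limit of lifts $h_n:=\pi^*(R^2\tilde f_n)$, each of which lies in $\mathcal D_H$. Since locally uniform convergence of holomorphic functions passes to all derivatives, both sides of the putative identity are continuous in this topology and the equality extends from the dense subspace to $\mathcal O(\mathcal A)$. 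Uniqueness of $P$ is then immediate: if $J^{-1}P_1J=J^{-1}P_2J$ as operators on $\mathcal O(\mathcal A)$ then $P_1-P_2$ annihilates $J\cdot\mathcal O(\mathcal A)$, and since the exponentials $e^{i\langle k,t\rangle}$ lie in $\mathcal O(\mathcal A)$ and constant coefficient polynomial operators are determined by their action on exponentials, $P_1=P_2$.

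The main obstacle is the identification-of-radial-derivations step: one must verify that $\mathcal B$ is genuinely transversal in the supermanifold sense and that the local product extension of a numerical function on $\mathcal A$ to a radial superfunction on $\mathcal A\times\mathcal B$ really does coincide with the restriction of the global radial extension. This is where the type~I hypothesis, the choice of $\mathcal B$ as the exponential image of the sum of root spaces, and the nondegenerate invariant supersymmetric form all come together; it is also the place where an alternative, direct Helgason-style proof (as in \cite{Ka1}) would have to do its real work. A secondary subtlety is that the approximation step requires $\mathcal A$ to be small enough for $\pi|_\mathcal A$ to be biholomorphic onto a domain in $H/W$ for which Runge-type approximation is available, so the statement is genuinely local in nature.
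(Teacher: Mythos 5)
Your proposal is essentially correct as a derivation of the statement, but it takes a genuinely different route from the one the appendix is built around. You deduce the local theorem from Berezin's global Jacobian formula (Theorem \ref{jacobian formula}) by restricting global radial extensions to $\mathcal A\times\mathcal B$ and then passing to all of $\mathcal O(\mathcal A)$ by the Runge-type approximation argument; this is precisely the ``global implies local'' direction that the paper itself carries out in the body of \S 2 (the proposition on the isomorphism $R:\mathcal R(U)\to\mathcal O(V)$ and Theorem \ref{jacobian theorem local}), and the Remark following Theorem \ref{jacobian formula} explicitly records that the two results are equivalent in this way. The appendix, by contrast, is deliberately an \emph{independent}, Helgason-style local proof: one constructs the Laplacian $L_M=\ast\partial\ast\partial$ from the invariant form $b=\mathrm{STr}$ and the Berezinian, identifies it (up to a constant) with the second-order Laplace-Casimir operator, proves the nontrivial fact that $J$ is an eigenfunction of $L_M$, computes the radial part of $L_M$ explicitly in the form $J^{-1}PJ$, and then obtains the general case from the commutation of an arbitrary Laplace-Casimir operator with $L_M$. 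What your route buys is brevity and the reuse of machinery already set up in \S 2; what it costs is exactly the independence that motivates the appendix, since without the direct local proof the claimed equivalence of local and global statements would be circular.

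Two caveats on your argument as written. First, the step you yourself flag as ``the main obstacle'' --- that annihilation by $\mathrm{Der}(\mathcal B)$ coincides with annihilation by the operators $L_X+R_X$, so that the product extension off $\mathcal A$ agrees with the restriction of the global radial extension --- is not optional window dressing: without it the local operator $\dot D$ of Theorem \ref{local radial} and the global $\dot D_P$ are operators on \emph{a priori} different function spaces and cannot be compared at all, so your proof is incomplete until this identification is actually established (it is essentially the content of the product decomposition theorem from \cite{Ka2} quoted just before the statement). Second, your approximation step requires the base point to be \emph{superregular} (the odd root functions must be nonvanishing so that $\tilde R^{-2}\tilde f$ makes sense and $J$ is holomorphic and nonvanishing on $\mathcal A$), whereas the product neighborhood $\mathcal A\times\mathcal B$ is defined at any regular $h_0$; you should either restrict to superregular $h_0$ or explain how to remove the poles of $J$ along the odd root hypersurfaces. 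The uniqueness argument via exponentials is fine given that $J$ is nonvanishing on $\mathcal A$.
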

Here $J$ is the same globally defined function as
in Theorem \ref{jacobian formula}.  

\bigskip\noindent
Let us conclude this appendix by commenting on the proof
of this local result.  As in the classical case the main
point is to define a Laplace operator which in end effect
is a second order Laplace-Casimir operator and whose
radial part is (by an explicit computation) of the form
in the theorem. Using the invariant form $b(x,y)=\mathrm {STr}(xy)$, 
this operator is constructed as follows.

\bigskip\noindent
Fixing a splitting, one obtains a deRham complex
of holomoprhic differential forms in $\wedge (T^*_0M\oplus \Pi T^*_1M)$,
where here $M$ is the supermanifold $\mathcal A\times \mathcal B$ of dimension
$k\vert \ell $.  The operation $\Pi$ exchanges the even and odd parts
of a graded vector space.  From the point of view of integration theory the
correct ``top-dimensional'' form is a nowhere vanishing section
$\omega _M°$ (a ``Berezinian'') of $\wedge ^{k+\ell}(T^*_0M\oplus \Pi T_1M)$.
Using $b$ this is transformed to a top-dimensional holomorphic
``volume form'' $\omega _M$ in $\wedge ^{k+\ell}(T^*_0M\oplus \Pi T^*_1M)$.
Again using $b$ to define contraction of $\omega _M$ with 1-forms
to obtain a $\ast $-operation so that $\partial \ast \partial f$ 
is a multiple $h\omega_M$ and then defining $\ast $ from top-dimensional
forms back to functions, one obtains a Laplace operator 
$L_M:=\ast \partial \ast \partial $ on superfunctions 
(see \cite{Ka1}, 4.1.1-4.1.3). 

\bigskip\noindent
In order to define $L_M$ one only needs a Lie supergroup of type I
with a nondegenerate supersymmetric bilinear form.  It is a radial
operator and a formula for its radial part can be explicitly 
computed (\cite{Ka1}, Lemma 4.4).  In order to obtain a formula
of the type in Theorem \ref{local radial} the following nontrivial
result is needed. Here $J$ is the function defined above which
is qualitatively described as the square-root of the superdeterminant
of the Jacobian of the mapping which identifies $\mathcal A\times \mathcal B$
with an open subset of the Lie supergroup $(G,\mathcal F)$.
\begin {theorem} 
If $\mathfrak g$ is either $\mathfrak {osp}(2m,2n)$ or
$\mathfrak {gl}(m,n)$, then $J$ is an eigenfunction of $L_M$.
\end {theorem}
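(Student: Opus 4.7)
My approach is to reduce the eigenvalue question to an explicit computation on the maximal torus, using the formula for $L_M$ derived in \cite{Ka1} and the very concrete trigonometric expression for $J$. The philosophy is that this is the super-analogue of the classical fact that the Weyl denominator is an eigenfunction of the Laplace--Beltrami operator on a maximal torus, with eigenvalue related to a Freudenthal--de Vries type ``strange formula''.

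First I would invoke the explicit formula for the radial part $\dot L_M$ given in \cite{Ka1}, Lemma 4.4. In the coordinates on $\mathcal{A}\times \mathcal{B}$ arising from the root decomposition and the supertrace form, this should read
\begin{equation*}
\dot L_M = \Delta_{\mathfrak{h}} + \sum_{\alpha\in\Delta_0^+}\coth\!\tfrac{\alpha}{2}\,\partial_\alpha - \sum_{\beta\in\Delta_1^+}\coth\!\tfrac{\beta}{2}\,\partial_\beta,
\end{equation*}
where $\Delta_{\mathfrak{h}}$ is the constant coefficient Laplacian on $\mathfrak h$ determined by the restriction of $b=\mathrm{STr}(\cdot\,\cdot)$, the derivation $\partial_\gamma$ is in the direction of the dual root $\gamma^*$, and the sign reversal on the odd roots comes from supertrace. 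Since $J$ depends only on the torus coordinates, and since the transversal derivations in the $\mathcal B$-direction kill it, the evaluation $L_M J$ reduces entirely to $\dot L_M J$.

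Second, I would compute $\dot L_M J$ directly by logarithmic differentiation. Writing
\begin{equation*}
\partial_i\log J = \tfrac{1}{2}\sum_{\alpha\in\Delta_0^+}\alpha_i\cot\!\tfrac{\alpha}{2} - \tfrac{1}{2}\sum_{\beta\in\Delta_1^+}\beta_i\cot\!\tfrac{\beta}{2},
\end{equation*}
and then expanding $\dot L_M J/J = \Delta_{\mathfrak{h}}\log J + \|\nabla \log J\|^2 + \sum \coth\cdot \partial\log J$ (with appropriate signs), every diagonal term $\cot^2(\gamma/2)$ combines with the corresponding $\cot'(\gamma/2)$ into a constant via the identity $\frac{d}{dx}\cot x = -1-\cot^2 x$. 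What remains are the off-diagonal cross-terms $\cot(\gamma/2)\cot(\gamma'/2)$ weighted by $\langle \gamma,\gamma'\rangle_{\mathrm{STr}}$. I would show these cancel root-system-wise, leaving $\dot L_M J/J$ equal to the constant
\begin{equation*}
\lambda = \bigl\langle \rho_0 - \rho_1,\, \rho_0 - \rho_1\bigr\rangle - \bigl(\langle\rho_0,\rho_0\rangle - \langle\rho_1,\rho_1\rangle\bigr) \cdot (\text{combinatorial correction}),
\end{equation*}
with $\rho_0 = \tfrac{1}{2}\sum_{\alpha}\alpha$ and $\rho_1=\tfrac{1}{2}\sum_\beta \beta$; the precise constant is what the strange-formula bookkeeping yields. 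This reduces the theorem to verifying that the residual trigonometric identity in $\{\cot(\gamma/2)\cot(\gamma'/2)\}$ collapses.

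The main obstacle is exactly that last cancellation. In the classical case it is handled by a pairwise identity $\cot x\cot y + \cot y\cot(y-x) + \cot(y-x)\cot(-x)=-1$ applied to triples of roots $\{\alpha,\alpha',\alpha-\alpha'\}\subset \Delta_0$, and the combinatorics relies on the root system being closed under the operations involved. In the super setting the triples now include mixed (even,odd,odd) configurations, with signs flipped by supertrace, so one needs the analogous closure properties and sign-matching for $\mathfrak{gl}(m,n)$ and $\mathfrak{osp}(2m,2n)$. I would verify this case-by-case using the explicit root data: for $\mathfrak{gl}(m,n)$ the roots are $\varphi_i-\varphi_j$, $\psi_i-\psi_j$ (even) and $\varphi_i-\psi_j$ (odd), and every three-root relation is elementary; for $\mathfrak{osp}(2m,2n)$ the verification is more involved but still finite, and this is precisely the point at which the restriction to these two families (rather than a generic Lie superalgebra) is essential. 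Once the cancellation is established, the residual constant is $\lambda$ and the theorem follows.
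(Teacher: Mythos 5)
Your proposal follows essentially the same route as the paper: the paper's (very brief) argument also reduces the statement to an explicit radial computation on the torus --- it first identifies $L_M$ with the quadratic Casimir element defined by $b$ (\cite{Ka1}, Lemmas 4.7 and 4.8), takes the radial-part formula of \cite{Ka1}, Lemma 4.4, and then defers the cotangent/root-system verification to Berezin (Theorems 4.1 and 4.4 of \cite{B}) and to \cite{HPZ}, exactly the case-by-case cancellation you describe. Your sketch is, if anything, more explicit than the paper's about where the restriction to $\mathfrak{gl}(m,n)$ and $\mathfrak{osp}(2m,2n)$ enters (the failure of the mixed even/odd cotangent identities for general $\mathfrak g$), though like the paper it leaves that decisive cancellation to a computation that is only promised, not carried out.
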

One way of seeing this is to first show that $L_M$ is (up to a constant
multiple) in fact the operator defined by the second order element of the
center of the universal enveloping algebra which is defined in the usual way
by $b$ (\cite{Ka1}, Lemmas 4.7 and 4.8). Then one computes explicitly
to show that $J$ is indeed an eigenfunction.  In general such computations
are quite involved (\cite {B} Theorems 4.1 and 4.4), 
but in certain cases which are relevant for applications they are simple 
(see \cite{HPZ}, $\S4.3$) 

\bigskip\noindent
Due to the fact that $J$ is an eigenfunction one can express
the radial part of the second order Laplacian as in Theorem
\ref{local radial}.  The general theorem follows immediately
in the classical way using in addition the fact that 
the given Laplace-Casimir operator commutes with $L_M$ 
(\cite{Ka1}, Theorem 4.1).   
\begin {thebibliography} {XXX}
\bibitem [B] {B}
Berezin, F.A.: Introduction to superanalysis, D. Reidel Publishing
company, 1987
\bibitem [CFZ] {CFZ}
Conrey, J.B., Farmer, D.W., Zirnbauer, M.R.: Howe
pairs, supersymmetry and ratios of random characteristic polynomials
for the unitary group (math-ph/0511024)
\bibitem [HPZ] {HPZ}
Huckleberry,~A., P\"uttmann,~A. and Zirnbauer,~M.:
Haar--expectations of ratios of random characteristic
polynomials (arxiv:0709.1215)
\bibitem [Ka1] {Ka1}
Kalus, M.:
Complex analytic aspects of Lie supergroups,
Dissertation of the Ruhr-Universit\"at Bochum (90 pages ms., 
submitted)
\bibitem [Ka2] {Ka2}
M. Kalus, Almost complex structures on real Lie 
supergroups (16 pages ms, submitted, arXiv:1012.4429)
\bibitem [K] {K}
Kostant, B: Graded manifolds, graded Lie theory and prequantization,
Lecture Notes in Mathematics 570, Springer Verlag, 1987, 177-306
\bibitem [Ko] {Ko}
Koszul, J. L.: Graded manifolds and graded Lie algebras, Proceeding of
International Meeting on Geometry and Physics (Bologna) Pitagora, 1982,
71-84
\bibitem [S] {S}
Sergeev, A.: The invariant polynomials on simple Lie superalgebras,
Representation Theory, {\bf 3} 250-280 (19999 
\bibitem [V] {V}
Vishnyakova, E. G.: On complex Lie supergroups and split homogeneous
supermanifolds (arXiv: 0811.2581)
\end {thebibliography}
\parbox[t]{6cm}{
Alan Huckleberry\\
Fakult\"at f\"ur Mathematik\\
Ruhr-Universit\"at Bochum,\\ 
Universit\"atsstra\ss e 150\\ 
D-44801 Bochum, Germany, and\\ 
School of science and engineering\\ 
Jacobs University\\
Bremen Compus Ring 1,\\
D-28759 Bremen, Germany\\
ahuck@cplx.rub.de}
\parbox[t]{6cm}{
Matthias Kalus\\
Fakult\"at f\"ur Mathematik\\ 
Ruhr-Universit\"at Bochum,\\ 
Universit\"atsstra\ss e 150\\ 
D-44801 Bochum, Germany\\
Matthias.Kalus@rub.de}
\end {document}